\newtheorem{theorem}{Theorem}
\newtheorem{corollary}[theorem]{Corollary}
\newtheorem{lemma}[theorem]{Lemma}
\newtheorem{definition}[theorem]{Definition}
\newtheorem{proposition}[theorem]{Proposition}
\newtheorem{assumption}[theorem]{Assumption}
\numberwithin{equation}{section}
\numberwithin{theorem}{section}
\newcommand{\rr}{{\mathbb{R}}}
\newcommand{\nn}{{\mathbb{N}}}
\newcommand{\ee}{{\mathbb{E}\,}}
\newcommand{\pp}{{\mathbb{P}}}
\newcommand{\tr}{{\operatorname{Tr}\,}}
\newcommand{\beq}[1]{\begin{equation} \label{#1}}
\newcommand{\eeq}{\end{equation}}
\newcommand{\arcosh}{\operatorname{arcosh}}
\begin{document}
	\addtokomafont{author}{\raggedright}
	\title{ \raggedright The de Almeida-Thouless Line in Hierarchical Quantum Spin Glasses}
     \author{\hspace{-.075in} Chokri Manai and Simone Warzel}
     \date{\vspace{-.3in}}
	
	\maketitle

	\minisec{Abstract}
	We determine explicitly and discuss in detail the effects of the joint presence of a longitudinal and a transversal (random) magnetic field on the phases of the Random Energy Model (REM) and its hierarchical generalization, the GREM. 
	Our results extent known results both in the classical case of vanishing transversal field and in the quantum case for vanishing longitudinal field. 
	Following Derrida and Gardner, we argue that the longitudinal field has to be implemented hierarchically also in the Quantum GREM. We show that this ensures the shrinking of the spin glass phase in the presence of the magnetic fields as also expected for the Quantum Sherrington-Kirkpatrick model. 


\section{Introduction}\label{sec:intro}
Mean-field spin glasses such as the Sherrington-Kirkpatrick (SK) model  have long served as an inspiration to both physicists and mathematicians \cite{MPV86,Tal11,Pan13}. For these classical glasses, Parisi's replica ansatz for the free energy presents one of the rare gems of an exactly solvable case, whose solution covers extremely complex behaviour -- notably the occurrence of a frozen glass phase below a certain critical temperature $T_c $. 
Since spins are intrinsically quantum-mechanical objects, physicists have  started early on to investigate the quantum effects caused by the inclusion of a transversal magnetic field. 
Unfortunately, unlike the inclusion of a longitudinal magnetic field in the SK-model, the transversal field seems to crash all attempts of 
an explicit Parisi solution. One either has to resort to approximations or numerical calculations for the full phase diagram \cite{US87,YI86,HM93,ONS07,You17,SIC12} or bounds \cite{LRRS19,LMRW21} or more qualitative results~\cite{Craw07,AB19} for the Quantum SK-model. 
It is therefore rather remarkable that the associated hierarchical caricature, the generalised random energy model (GREM), still admits an explicit solution of Parisi type even in the presence of a transversal field \cite{Gold90,MW20a,MW20c}. 
The GREM was initially invented by Derrida~\cite{Der85} to qualitatively capture the behaviour of the free energy of more complicated glasses.  It was mathematically reformulated in~\cite{Rue87} and its significance for Parisi's ansatz was later clarified in~\cite{Gue03,ASS03,Tal06}. 

One central questions for spin glasses in external magnetic fields is whether the fields destabilise the low-temperature glass phase or not. For the SK-model in a constant longitudinal field, de Almeida and Thouless \cite{dAT78A} determined an equation for the critical temperature $T_c(h) $, which turns out to be decreasing in the field strength $ h $ and is known under the name de Almeida-Thouless (AT) line.  Below $T_c(h) $ the replica symmetry has been proven to be broken \cite{Ton02}.  Rigorous results above $  T_c(h) $ are still incomplete (see e.g.~\cite{AB+21} and refs.\ therein). 
Unlike for the SK-model, implementing the longitudinal field naively in GREM models causes the frozen phase to expand \cite{BK08,AK14,AP19}.
Derrida and Gardner \cite{DG86a} therefore suggested a  hierarchical implementation of the longitudinal magnetic field, which then leads again to a destabilisation of the frozen phase. 

The present paper now investigates the question of the stability of the low-temperature phase in general GREM models under the joint presence of a longitudinal and transversal field. We will present explicit formulas for the free energy of such Q(uantum)GREMs for both cases: a naive implementation of the longitudinal magnetic field and a hierarchical implementation. We will discuss the stability of the glass phase and calculate associated critical exponents. 

\subsection{The Quantum GREM with a random longitudinal field}\label{sec:qcremh}
The QGREM  with a (random) external transversal and longitudinal magnetic field is a Hamiltonian on $\psi \in \ell^2(\mathcal{Q}_N)$ 
of the form 
\begin{equation}\label{def:qrem}
(H_N\psi)({\pmb{\sigma}})= U({\pmb{\sigma}})\psi(({\pmb{\sigma}}) - h(\pmb{\sigma})\psi({\pmb{\sigma}})- (B \psi)(\pmb{\sigma}) .
\end{equation}
The first term represents the GREM energy landscape on the Hamming cube $\mathcal{Q}_N \coloneqq \{-1,1\}^N$ and 
is given by a centred Gaussian process $U(\pmb{\sigma})$
with covariance function 
	\begin{equation}\label{eq:uGREM}
\ee[U(\pmb{\sigma}) U(\pmb{\sigma}^\prime)] =  N A(q_N(\pmb{\sigma},\pmb{\sigma}^\prime)),
\end{equation}
where $A \coloneqq [0,1] \to [0,1]$ is a fixed non-decreasing, right-continuous, and normalised function, $A(1) =1$,
which does not depend on $N$. Moreover, $q_N$ denotes the  normalised lexicographic overlap of spin configurations 
$ \pmb{\sigma}, \pmb{\sigma}' \in \mathcal{Q}_N $:
\begin{equation}\label{eq:overlap}
q_N(\pmb{\sigma},\pmb{\sigma}^\prime) \coloneqq \begin{cases}
1 & \text{ if } \pmb{\sigma} = \pmb{\sigma}^\prime, \\
\frac1N \min \{1 \leq i \leq N; \sigma_i \neq \sigma^\prime_i \} & \text{ else }. \end{cases} 
\end{equation}
A straightforward implementation of a (random) longitudinal magnetic field is achieved through setting
\begin{equation}\label{eq:longfield}
h(\pmb{\sigma}) =  \sum_{j=1}^{N} h_j \sigma_j  .
\end{equation}
Interpreting the configuration basis $ \pmb{\sigma} $ as the $ z $-components of $ N $ quantum spin-$1/2 $,  
 a (random) transversal field $B$ in $ x $-direction is given by the sum of the Pauli $x$-matrices $  \pmb{s}_j $ with weights $b_j\in \mathbb{R} $:
\begin{equation}\label{def:B} 
(B\psi)(\pmb{\sigma}) :=    \sum_{j=1}^{N}  b_j \;  \big( \pmb{s}_j\psi\big)(\pmb{\sigma}) , \qquad  \big( \pmb{s}_j  \psi\big)(\pmb{\sigma}) :=   \psi(F_j \pmb{\sigma}), \qquad  F_j \pmb{\sigma} := (\sigma_1,\ldots,-\sigma_j,\ldots,\sigma_N) .
\end{equation}
We will assume throughout that the variables  $(U({\pmb{\sigma}}))  $, $ (h_j)$ and $(b_j)$ are mutually independent and that the field variables $h_j$ and $b_j$ are independent copies of absolutely integrable random variables $\mathfrak{h}$ and $\mathfrak{b}$, respectively.

Occurring phase transitions, in particular the de Almeida-Thouless line, are encoded in the limit of the pressure (or the negative free energy times the inverse temperature $ \beta $)
\begin{equation} \Phi_N(\beta,\mathfrak{h},\mathfrak{b}) \coloneqq \frac1N \ln \tr e^{-\beta H_N } \end{equation}
as the number of spins $ N $ goes to infinity.
Our first main theorem is an explicit formula for this limit  in terms of the concave hull $\bar{A}$ of $A$ and the right derivative $\bar{a}$ of $\bar{A}$.

\begin{theorem}\label{thm:qcremh}
Let $U(\pmb{\sigma})$ be a GREM with distribution function $A$ and suppose that the longitudinal random field is implemented as in~\eqref{eq:longfield}.  For any $\beta \geq 0$ and any absolutely integrable random variables $	\mathfrak{h},\mathfrak{b} $,  
the pressure converges  almost sure
\begin{equation}\label{eq:qcremh}
\lim_{N \to \infty} \Phi_N(\beta,\mathfrak{h},\mathfrak{b}) = \sup_{0 \leq z \leq 1} \left( \int_{0}^{z} \varphi(\beta,\mathfrak{h},x) \, dx + (1-z) \ee[\ln 2 \cosh(\beta \sqrt{\mathfrak{b}^2+\mathfrak{h}^2})] \right).
\end{equation}
The density $\varphi(\beta,\mathfrak{h},x)$ is given by 
\begin{equation}\label{eq:density}
\varphi(\beta,\mathfrak{h},x) \coloneqq \begin{cases} 
\ln 2 + \bar{a}(x)\frac{\beta^2}{2} + \ee[\ln \cosh \beta \mathfrak{h}] & if \quad \beta \leq \beta_c(x) \\
\beta(\bar{a}(x)\beta_c(x) + \ee[\mathfrak{h} \tanh \beta_c(x) \mathfrak{h}]) & if \quad \beta > \beta_c(x) 
\end{cases}
\end{equation}
where $\beta_c(x) = \beta_c(x,\mathfrak{h})$ is the unique positive solution of the self-consistency equation 
\begin{equation}\label{eq:bcrith}
\frac{\bar{a}(x)}{2}\beta_c(x)^2 =  \ln 2 + \ee[\ln \cosh \beta_c(x) \mathfrak{h}] - \beta_c(x) \ee[\mathfrak{h} \tanh \beta_c(x) \mathfrak{h}]. 
\end{equation}
Moreover, $\varphi(\beta,\mathfrak{h},x)$ is a decreasing function of $x$ and strictly increasing and convex in $\beta$, while $\beta_c(x)$ is increasing in $x$.
\end{theorem}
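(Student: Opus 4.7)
The strategy is to extend the approach of \cite{MW20a,MW20c} for the purely transversal QGREM to accommodate the additional longitudinal field. By Gaussian interpolation on the covariance $A$, the pressure $\Phi_N$ depends continuously on $A$ in uniform norm, so it suffices to prove \eqref{eq:qcremh} for piecewise linear $A$ with finitely many breakpoints. For such $A$ the GREM decomposes as a finite hierarchical tree on blocks of consecutive spins, with $U$ written as a nested sum of independent Gaussian increments; crucially, the external longitudinal and transversal fields respect this block structure because they are sums of single-spin terms.

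The technical core is the analysis of the single-level building block: a REM of size $M$ with variance parameter $a$, perturbed by a random longitudinal and a random transversal field. The task is to show that as $M \to \infty$ its pressure converges to $\ln 2 + a\beta^2/2 + \mathbb{E}[\ln\cosh\beta\mathfrak{h}]$ in the high-temperature regime $\beta \leq \beta_c$ and to $\beta(a\beta_c + \mathbb{E}[\mathfrak{h}\tanh\beta_c\mathfrak{h}])$ in the frozen regime $\beta > \beta_c$, with matching at $\beta_c$ producing the self-consistency equation \eqref{eq:bcrith}. For the upper bound I would use a Trotter / path-integral representation as in \cite{MW20a} combined with a second-moment computation in which the longitudinal factor contributes $\mathbb{E}[\ln\cosh\beta\mathfrak{h}]$ per spin, while the transversal contribution is subleading in this regime; for the lower bound, restricting to frozen (constant) quantum paths retains the full longitudinal entropy while discarding the transversal contribution.

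With the single-level pressure in hand, the full hierarchical pressure would be assembled by induction on the tree depth. The ordering of the slopes that maximises the resulting nested variational expression is the decreasing one, which is how the concave hull $\bar A$ and its right derivative $\bar a$ enter \eqref{eq:density}. The variational parameter $z$ in \eqref{eq:qcremh} encodes the hierarchical depth at which the system transitions from a succession of glassy levels (each contributing $\varphi(\beta,\mathfrak{h},x)$, with classical-REM-type longitudinal dependence only) to a paramagnetic residual on the remaining $(1-z)N$ spins, which carries the genuinely quantum single-spin pressure $\mathbb{E}[\ln 2\cosh\beta\sqrt{\mathfrak{h}^2+\mathfrak{b}^2}]$. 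The monotonicity of $\varphi$ in $x$ and of $\beta_c$ in $x$, together with convexity and monotonicity in $\beta$, follow from $\bar a$ being non-increasing and from implicit differentiation of \eqref{eq:bcrith} using the convexity of $\beta \mapsto \ln\cosh(\beta \mathfrak{h})$.

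The main obstacle will be the single-level base case, specifically the rigorous proof that the transversal field contributes nothing to $\varphi$ in the frozen regime. This requires carefully tracking the interference of the two fields in the Trotter expansion and showing that ground-state-saturating configurations lie exponentially close to classical eigenstates of the diagonal part of the Hamiltonian. The techniques of \cite{LRRS19,MW20a,LMRW21} should extend, but the competition with the longitudinal term adds a layer of bookkeeping absent in the previously known cases.
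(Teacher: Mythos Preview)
Your high-level framing is right---reduce to finite-level GREM by Gaussian interpolation in $A$, handle the finite-level case, then pass to the limit---and this is exactly the outer shell of the paper's argument. But the paper organises the finite-level analysis quite differently from what you propose, and its route avoids precisely the obstacle you flag as the main difficulty.

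Rather than treating a single-level QREM with \emph{both} fields as the basic building block via a Trotter expansion, the paper separates the two fields completely. First it solves the purely \emph{classical} $n$-level GREM with random longitudinal field (Theorem~\ref{prop:gremh}), extending the large-deviation occupation-number analysis of \cite{AK14,AP19} to arbitrary depth; this is where the self-consistency equation~\eqref{eq:bcrith} and the partial pressures $\varphi^{(l)}$ arise, by solving a linear optimisation problem under the entropy constraints~\eqref{eq:cont}. Only afterwards is the transversal field added, and this is done not by Trotter but by the \emph{peeling principle} of \cite[Theorem~2.3]{MW20c}: since $U-h$ is diagonal, the existing peeling machinery applies verbatim to the Hamiltonian $(U-h)-B$, reducing the quantum pressure to a maximum over cut points $x_k$ of classical pressures already computed in the first step plus the paramagnetic term $\ee[\ln 2\cosh\beta\sqrt{\mathfrak{b}^2+\mathfrak{h}^2}]$. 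A Slepian comparison then shows the maximum is attained at endpoints of the concave hull, which is how $\bar a$ enters. Your worry about ``the transversal field contributing nothing to $\varphi$ in the frozen regime'' thus never arises: the peeling principle dispatches the quantum part in one stroke, and all the work with $\mathfrak{h}$ happens in a purely classical setting where large deviations suffice. Your direct Trotter route may well be feasible, but it reproves from scratch what the paper gets for free by noticing that $h$ is diagonal.
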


Theorem~\ref{thm:qcremh}, whose proof will be spelled out in Section~\ref{sec:pfqcremh} below, is a generalisation of Theorem~1.4 in~\cite{MW20c}, which addresses the case without a longitudinal field, $ \mathfrak{h} = 0 $. 
In the classical case without transversal magnetic  field, $ \mathfrak{b}= 0 $, it generalises the results of \cite{BK08}, which covers the case that $ \mathfrak{h}  $ is constant, and of  \cite{AK14,AP19}, which treats the special case of a REM or two-level GREM in a random magnetic field.

\subsection{Stability of the glass phase in the QGREM with longitudinal field}

From~$\eqref{eq:density}$ and the monotonicity of $\varphi(\beta,\mathfrak{h},x) $ and $\beta_c(x)$, it is evident that the location of the glass transition predicted by~\eqref{eq:qcremh} is completely determined by $\varphi(\beta,\mathfrak{h},0) $ which agrees with a rescaled REM pressure \cite{AK14}. 
The REM's energies $U(\pmb{\sigma})$, $ \pmb{\sigma} \in \mathcal{Q}_N $, are independent and identically distributed centred Gaussian variables with variance $ N $. This corresponds to choosing the step-function $ A(x) = 0 $ for $ x < 1 $ and $ A(1) = 1 $ in~\eqref{eq:uGREM}.  
In order to understand the qualitative behaviour of the phase diagram  and in particular the question of the stability of the glass phase in the QGREM with longitudinal field~\eqref{eq:longfield}, it is thus convenient to restrict the discussion to the REM with constant fields, i.e., $\mathfrak{h} = h$ and $\mathfrak{b} = \Gamma$ for some positive constants $h,\Gamma \geq 0$.  In fact, even quantitative properties such as the dependence of the critical temperature $T_c(h) =  \beta_c(0,h)^{-1} $ on the longitudinal field $ h $ coincide for the general GREM with the REM except for some numerical factors which depend on $\bar{a}(0)$. 
We therefore state the application of Theorem~\ref{thm:qcremh} to the QREM as our next corollary. 
\begin{corollary}\label{cor:qremh}
Consider a REM process $U(\pmb{\sigma})$ and constant longitudinal and transversal fields of strength $h ,\Gamma \geq 0$. Then, almost surely
\begin{equation}\label{eq:qremh}
\lim_{N \to \infty} \Phi_N(\beta,h,\Gamma) = \max\{ \Phi^\mathrm{REM}(\beta,h), \ln 2 \cosh(\beta \sqrt{h^2+\Gamma^2)} . \} ,
\end{equation}
where, $ \Phi^\mathrm{REM}(\beta,h)$ denotes the function 
\begin{equation}\label{eq:remh}
 \Phi^\mathrm{REM}(\beta,h) = \begin{cases} 
 \ln 2 + \frac{\beta^2}{2} + \ln \cosh \beta h ] & if \quad \beta \leq \beta_c(h) \\
 \beta(\beta_c(x) + h \tanh( \beta_c(h)h) & if \quad \beta > \beta_c(h) 
 \end{cases}
\end{equation}
and $\beta_c(h)$ is the unique positive solution of 	
\begin{equation}\label{eq:bcrit2}
\beta_c(h)^2 = 2 r(\tanh(\beta_c(h)h))
\end{equation}
with the modified binary entropy $r \colon [-1,1] \to \rr$,
\begin{equation}\label{eq:ent}
r(x) \coloneqq - \left(\frac{1-x}{2} \ln \frac{1-x}{2} + \frac{1+x}{2} \ln \frac{1+x}{2} \right).
\end{equation}
\end{corollary}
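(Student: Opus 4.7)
The plan is to specialize Theorem~\ref{thm:qcremh} to the choice of covariance function corresponding to the REM, namely $A(x) = 0$ for $x<1$ and $A(1)=1$, together with the deterministic (degenerate) fields $\mathfrak h = h$ and $\mathfrak b = \Gamma$. The proof breaks into three small steps: identify $\bar a$, collapse the variational formula \eqref{eq:qcremh} to a maximum of two expressions, and then verify that the self-consistency equation \eqref{eq:bcrith} becomes \eqref{eq:bcrit2} through an algebraic identity for $r$.

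First I would compute the concave hull. The smallest concave majorant of the step function $\mathbf{1}_{\{x=1\}}$ on $[0,1]$ is the chord $\bar A(x) = x$, since any concave function attaining $0$ at $0$ and $1$ at $1$ lies above its chord only outside the interval; hence $\bar a(x)\equiv 1$ on $[0,1)$. Inserting $\bar a = 1$ and the constants $\mathfrak h = h$, $\mathfrak b = \Gamma$ into \eqref{eq:density} and \eqref{eq:bcrith}, the expectations drop out and both the density $\varphi(\beta,h,x)$ and the critical inverse temperature $\beta_c(x)$ become independent of $x$; the density coincides with $\Phi^{\mathrm{REM}}(\beta,h)$ of \eqref{eq:remh}, and the critical parameter satisfies $\tfrac12\beta_c^2 = \ln 2 + \ln \cosh(\beta_c h) - \beta_c h \tanh(\beta_c h)$.

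Second, I would collapse the variational formula. Since the integrand of \eqref{eq:qcremh} no longer depends on $x$, one has $\int_0^z \varphi(\beta,h,x)\,dx = z\,\Phi^{\mathrm{REM}}(\beta,h)$, so the objective becomes the affine map
\[
z \;\mapsto\; z\,\Phi^{\mathrm{REM}}(\beta,h) \,+\, (1-z)\,\ln 2\cosh\!\big(\beta\sqrt{h^2+\Gamma^2}\big)
\]
on $[0,1]$. Its supremum is attained at an endpoint, which immediately yields the maximum in \eqref{eq:qremh}.

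Third, I would match the two versions of the self-consistency equation. Setting $t = \tanh(\beta_c h)$, the identities $(1\pm t)/2 = e^{\pm\beta_c h}/(2\cosh\beta_c h)$ give $\ln[(1\pm t)/2] = \pm\beta_c h - \ln(2\cosh\beta_c h)$, and plugging these into the definition \eqref{eq:ent} of $r$ produces $r(t) = \ln(2\cosh\beta_c h) - \beta_c h\,t$, which is exactly the right-hand side of the specialized self-consistency equation. Multiplying through by $2$ recovers \eqref{eq:bcrit2}. None of these steps contains a genuine obstacle: once Theorem~\ref{thm:qcremh} is in hand, the corollary reduces to the elementary concave-hull computation, an affine optimization, and a two-line identity for $r$.
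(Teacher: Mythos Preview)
Your proposal is correct and follows essentially the same route as the paper's proof: both specialize Theorem~\ref{thm:qcremh} by noting that for the REM one has $\bar a\equiv 1$, so $\varphi(\beta,h,x)=\Phi^{\mathrm{REM}}(\beta,h)$ is constant in $x$ and the variational formula collapses to the maximum~\eqref{eq:qremh}, and both finish with the identity $r(\tanh x)=\ln 2+\ln\cosh x-x\tanh x$. Your justification of $\bar A(x)=x$ contains a garbled phrase (a concave function lies \emph{above} its chord on the interval, not outside it), but the conclusion is right and the rest of the argument is clean.
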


The short proof of Corollary~\ref{cor:qremh} can be found in the appendix.

\begin{figure}[ht]
	\begin{center}
		\includegraphics[width=.42\textwidth] {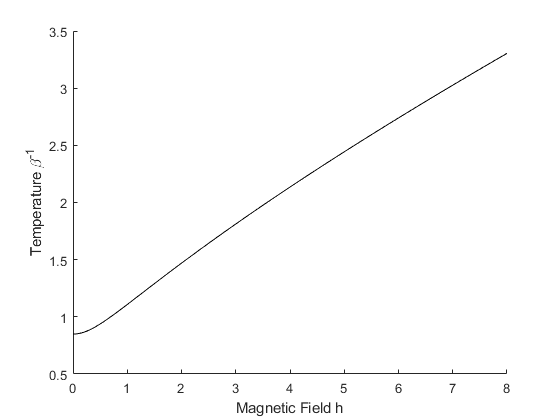}\hfill
		\includegraphics[width=.52\textwidth] {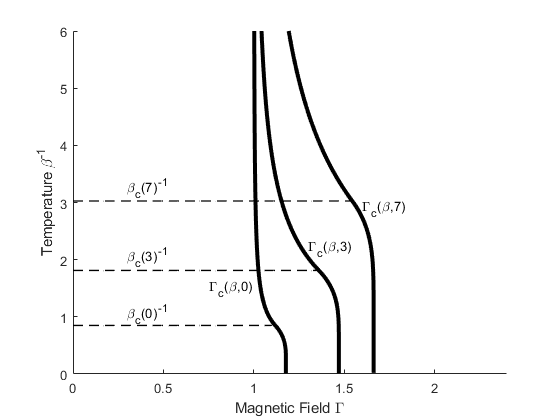}
		\caption{The left figures illustrates the freezing temperature $T_c(h) = \beta_c^{-1}(h)$ as a function of the longitudinal field $h$. On the right  is the $T-\Gamma$ phase diagram with the critical magnetic field $\Gamma_c(\beta,\Gamma)$ as well as the critical temperature evaluated at $h = 0,3,7$ }\label{fig:qrem}

	\end{center}
\end{figure}

For fixed $ h > 0 $ the phase diagram, which is plotted in Figure~\ref{fig:qrem},  resembles that of the QREM without longitudinal field~\cite{Gold90,MW20a}. 
The model undergoes a magnetic transition at 
\begin{equation}\label{eq:gamcrith}
\Gamma_c(\beta,h) \coloneqq \sqrt{\beta^{-2} \arcosh\left(\frac12 \exp(\Phi^{\mathrm{REM}}(\beta,h))\right)^2 -h^2 },
\end{equation}
where the magnetization in $x$-direction jumps. At fixed $ h > 0 $, this line separates the quantum paramagnet characterised by a positive magnetisation in $ x $-direction, from the classical spin glass. 

The unique positive  solution $ \beta_c(h) \in (0,\sqrt{2\ln 2 } ) $ of the self-consistency equation~\eqref{eq:bcrit2} marks the inverse freezing temperature at longitudinal field $ h > 0 $. 
For fixed $ h > 0 $ and $\Gamma < \Gamma_c(\beta,h)$  this line separates the high-temperature regime of the classical paramagnet from the spin glass phase. 
In comparison to the case $ h = 0 $, the longitudinal field causes an extensive magnetization $ M(\pmb{\sigma}) := \sum_{i=1}^N \sigma_i  $  in $z$-direction under the Gibbs average. 
The specific magnetization in $ z $-direction is a self-averaging quantity which converges  as $ N \to \infty $ to 
\[m_z(\beta,h) \coloneqq \frac{1}{\beta} \frac{\partial  \Phi}{\partial h}(\beta,h) =  \begin{cases} \tanh(\min\{\beta,\beta_c(h) \} h) , & \Gamma   < \Gamma_c(\beta,h) , \\ 
												\frac{h}{\sqrt{h^2+\Gamma^2}} \tanh( \beta \sqrt{h^2+\Gamma^2}) , & \Gamma   > \Gamma_c(\beta,h) .
												\end{cases}
\] 
The kink in its dependence on $ \beta $ for $  \Gamma   < \Gamma_c(\beta,h)  $ reflects the second-order freezing transition at $ \beta_c(h) $.

The following proposition summarises some basic properties of the critical inverse temperature $ \beta_c(h)$ and the critical transversal field $ \Gamma_c(\beta,h)$ as functions of $ h $.

\begin{proposition}\label{prop:critbg}
	The critical inverse temperature $\beta_c(h)$ and the critical magnetic field strength $\Gamma_c(\beta,h)$ have the following properties:
	\begin{itemize}
		\item[1.] $\beta_c(h)$ is a strictly decreasing function. Moreover, $ \beta_c(h) = \sqrt{2 \ln 2} \  (1 - h^2/2) +\mathcal{O}(h^4) $ for small $h$
			and asymptotically $\lim_{h \to \infty} \frac{h \beta_c(h)}{\ln h} = 1 $.
		\item[2.]  The high temperature limit
		$ \Gamma_c(0,h) \coloneqq \lim_{\beta  \to 0} \Gamma_c(\beta,h) = 1 $
		does not depend on $h$, and the low temperature limit
		\[ \lim_{\beta  \to \infty} \Gamma_c(\beta,h) = 
		\sqrt{ (\beta_c(h) + \tanh(\beta_c(h)h)h)^2 - h^2} \]
		resembles the ground-state phase transition.
		\item[3.] For any $\beta > 0$ the critical field strength $\Gamma_c(\beta,\cdot)$ is a strictly increasing function. In addition, we asymptotically have 
		$ \lim_{h \to \infty} \frac{\Gamma(\beta,h)}{\sqrt{h \beta_c(h)}} = 1 $.
	\end{itemize}
\end{proposition}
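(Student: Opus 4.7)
The identity $r(\tanh y) = \ln(2\cosh y) - y\tanh y$ (which follows from $1 \pm \tanh y = e^{\pm y}/\cosh y$) turns \eqref{eq:bcrit2} into $F(\beta_c(h),h) = 0$ with $F(\beta,h) := \tfrac{1}{2}\beta^2 - \ln(2\cosh(\beta h)) + \beta h\tanh(\beta h)$. A direct computation gives $\partial_\beta F = \beta(1 + h^2\operatorname{sech}^2(\beta h))$ and $\partial_h F = \beta^2 h\operatorname{sech}^2(\beta h)$, both strictly positive on $(0,\infty)^2$. The implicit function theorem then yields uniqueness, differentiability, and the explicit formula $\beta_c'(h) = -h\beta_c(h)\operatorname{sech}^2(\beta_c(h)h)/(1+h^2\operatorname{sech}^2(\beta_c(h)h)) < 0$. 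The small-$h$ expansion follows by substituting $\beta_c(h)^2 = 2\ln 2 + c_2 h^2 + O(h^4)$ into the Taylor series $\ln(2\cosh y) - y\tanh y = \ln 2 - y^2/2 + y^4/4 + O(y^6)$ and matching orders. For the large-$h$ tail, the expansion $\ln(2\cosh y) - y\tanh y = (1+2y)e^{-2y} + O(e^{-4y})$ as $y \to \infty$ reduces \eqref{eq:bcrit2} to $u\,e^{2u} \sim 4h^2$ with $u := \beta_c(h)h$, yielding $u/\ln h \to 1$.

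\textbf{Part 2.} By \eqref{eq:gamcrith}, $\Gamma_c(\beta,h) \geq 0$ is characterised by $\Phi^{\mathrm{REM}}(\beta,h) = \ln(2\cosh(\beta\sqrt{h^2+\Gamma_c^2}))$. For $\beta \to 0$ we are in the paramagnetic branch, and Taylor-expanding both sides to order $\beta^2$ yields $\beta^2(h^2+\Gamma_c^2)/2 = \beta^2(1+h^2)/2 + O(\beta^4)$, so $\Gamma_c \to 1$. For $\beta \to \infty$ and any fixed $h$, we are in the frozen branch, so $\Phi^{\mathrm{REM}}/\beta = \beta_c(h) + h\tanh(\beta_c(h)h)$, and using $\beta^{-1}\ln(2\cosh(\beta v)) = v + \beta^{-1}\ln(1+e^{-2\beta v}) \to v$ gives the stated low-temperature limit.

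\textbf{Part 3.} Implicit differentiation of the defining equation in $h$, together with the identity
\[ \partial_h\Phi^{\mathrm{REM}}(\beta,h) = \beta\tanh(\tilde\beta(h)\,h), \qquad \tilde\beta(h) := \min(\beta,\beta_c(h)), \]
yields $\Gamma_c\,\partial_h\Gamma_c = [v\tanh(\tilde\beta h) - h\tanh(\beta v)]/\tanh(\beta v)$ with $v := \sqrt{h^2+\Gamma_c^2}$. The identity for $\partial_h\Phi^{\mathrm{REM}}$ in the frozen phase is a direct consequence of the explicit formula for $\beta_c'$ from Part 1: the $\beta_c'$-contributions cancel against the $\operatorname{sech}^2$-terms. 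In the paramagnetic regime ($\tilde\beta = \beta$), positivity of the numerator is the strict monotonicity of $x \mapsto \tanh(\beta x)/x$ on $(0,\infty)$, applied to $h < v$. The frozen case is the main technical step: one needs $v\tanh(\beta_c h) > h\tanh(\beta v)$, which I would establish by exploiting $\tanh(\beta v) < 1$ to reduce (via a continuity/monotonicity argument in $\beta$ anchored at the transition value $\beta_c(h)$) to the elementary inequality $(\beta_c(h)/2)\sinh(2\beta_c(h)h) > h$. Setting $y = \beta_c(h)h$, this is the positivity on $(0,\infty)$ of $g(y) := [\ln(2\cosh y) - y\tanh y]\sinh(2y) - y$, which is checked from $g(0) = 0$ and $g'(y) > 0$ via elementary estimates on $2\cosh(2y)\ln(1+e^{-2y})$. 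Finally, the large-$h$ asymptotic of $\Gamma_c$ follows by noting that since $\beta_c(h) \to 0$ we are in the frozen phase for each fixed $\beta$ when $h$ is large; combining the Part 2 formula with the Part 1 asymptotic $h(1-\tanh(\beta_c(h)h)) = \beta_c(h)/2 + o(\beta_c)$ yields $v - h = \beta_c(h)/2 + o(\beta_c)$, hence $\Gamma_c^2 = (v-h)(v+h) = h\beta_c(h)(1+o(1))$.
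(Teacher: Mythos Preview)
Your Parts 1 and 2, and the paramagnetic branch and large-$h$ asymptotic of Part 3, coincide with the paper's argument (sometimes made more explicit, e.g.\ your implicit-function computation of $\beta_c'(h)$ versus the paper's one-line remark from \eqref{eq:bcrit2}).

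The real divergence is in the frozen branch of Part~3. After the common reduction to $v\tanh(\beta_c h) > h\tanh(\beta v)$ with $v=\sqrt{h^2+\Gamma_c^2}$, the paper rewrites the left-hand side of the equivalent inequality $\beta v/\tanh(\beta v) > \beta h/\tanh(\beta_c h)$ as $\eta(\Phi^{\mathrm{REM}}-\ln 2)$ with $\eta(y):=\arcosh(e^y)/\tanh(\arcosh(e^y))$, invokes convexity of $\eta$, and uses that $\Phi^{\mathrm{REM}}$ is linear in $\beta$ in the frozen phase together with the validity of the inequality at $\beta=\beta_c(h)$ (inherited from the paramagnetic argument).

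Your route is different: you bound $\tanh(\beta v)<1$ and aim for the stronger statement $v\geq h\coth(\beta_c h)$, which at the limiting value $v\to c:=\beta_c+h\tanh(\beta_c h)$ (i.e.\ $\beta\to\infty$) is exactly your elementary inequality $(\beta_c/2)\sinh(2\beta_c h)>h$. The difficulty is that $v(\beta)$ is \emph{increasing} in $\beta$ on the frozen branch (from $\beta v' = c\coth(\beta v)-v>0$, since $v<c$), so its infimum on $[\beta_c,\infty)$ is attained at $\beta_c$, not at $\infty$. Hence the ``continuity/monotonicity argument in $\beta$ anchored at $\beta_c$'' does not deliver the reduction as stated: what you would actually need is $v(\beta_c)\geq h\coth(\beta_c h)$, which is a different condition and not the one your elementary inequality encodes. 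Either this step needs a different monotone quantity to track, or you should replace it by the paper's convexity observation on $\eta$.
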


The proof of Proposition \ref{prop:critbg} is based on multiple elementary, but quite lengthy, computations, which we spelled out in the appendix for the convenience of the reader.\\

Let us put these findings in a general context. 
In classical SK-type models, the freezing temperature $T_c(h) = \beta_c(h)^{-1} $ decreases as $h$ becomes larger, i.e.~the glass phase shrinks \cite{dAT78A,Ton02}. Numerical calculations support the conjecture that in the Quantum SK-model, the longitudinal and transversal field destabilise the glass phase as well (cf.~\cite{ONS07,You17} and  \cite{SIC12}). In contrast, the REM and the QREM exhibit an expanding frozen phase for $h > 0$. This concerns not only the critical temperature $ T_c(h) $ but also the critical transversal magnetic field strength $\Gamma_c(\beta,h)$, which also increases with $ h $; see Figure~\ref{fig:qrem}. In this sense the QREM, although the limit $ p \to \infty $ of $ p $-spin models (cf.\ \cite{MW20b}),  features unphysical characteristics in presence of a longitudinal field. As we will argue next, this is a consequence of the unrealistic lack of correlations.

\subsection{The QGREM with a hierarchical longitudinal field }\label{sec:QGREMh}

That a longitudinal field stabilises the frozen phase in the QREM and QGREM, can be regarded as a quite unphysical behaviour. We will bypass this problem by following
 Derrida and Gardner's approach to incorporate the  magnetic field in $z$-direction as a hierarchical operator \cite{DG86a}. This choice can be physically justified: one should recall that the GREM was designed as a hierarchical approximation of the more involved SK-model, whose energy correlations are given by $ \mathbb{E}\left[ U(\pmb{\sigma}) U(\pmb{\sigma}^\prime) \right] = N r_N(\pmb{\sigma},\pmb{\sigma}^\prime)^2 $ in terms of the spin overlap $r_N(\pmb{\sigma},\pmb{\sigma}^\prime) = \sum_{j=1}^{N} \sigma_j \sigma_j^\prime$. 
 In fact, requiring that the entropy of likewise pair-correlated energies asymptotically coincides in the SK-model and the GREM, i.e.
 \[ \lim_{N \to \infty} \frac1N \ln\left( \frac{ \left| \{\pmb{\sigma}: r_N(\pmb{\sigma},\pmb{\sigma}^0)^2 > a  \} \right|   }{ \left| \{\pmb{\sigma}: A(q_N(\pmb{\sigma},\pmb{\sigma}^0))  > a  \} \right| }  \right) =1 \]
for all $ a \in (0,1) $, determines the choice $ A(x) = \gamma(x)^2 $,  where $\gamma$ is the inverse function  of 
\begin{equation}\label{eq:gamma}
\gamma^{-1}: [0,1] \to [0,1] , \quad \gamma^{-1}(a) \coloneqq 1- \frac{r(a)}{ \ln2}  = \frac{1-x}{2 \ln 2 } \ln (1-x) + \frac{1+x}{2\ln2} \ln (1+x) 
\end{equation} with the binary entropy $r$ from \eqref{eq:ent}. 
This follows from the known asymptotics 
\[  \left|\{\pmb{\sigma}: r_N(\pmb{\sigma},\pmb{\sigma}^0) > a/h  \}\right| \simeq 2^N 2^{-N\gamma^{-1}(a/h)}    \]
and $ \left| \{\pmb{\sigma}: q_N(\pmb{\sigma},\pmb{\sigma}^0) > a  \} \right| \simeq 2^N 2^{-aN} $. 

If we want to understand the SK-model with a longitudinal field, it is reasonable to consider the hierarchical reorganization of the magnetic field as well.
We start by introducing the notion of a general hierarchical field on the Hamming cube $\mathcal{Q}_N$.

\begin{definition}\label{def:hreg}
	We call a function $h \colon \mathcal{Q}_N \to \rr$ a hierarchical field with reference state $\pmb{\sigma}^0 \in \mathcal{Q}_N$ if there exists a function $\eta \colon [0,1] \to \rr$ such that 
	\begin{equation}
	h(\pmb{\sigma}) = N \eta(q_N(\pmb{\sigma},\pmb{\sigma}^0)),
	\end{equation}
	where $q$ is the lexicographic overlap~\eqref{eq:overlap}. Furthermore, $h$ is said to be a regular hierarchical field, if $\eta$ is a regular function on $[0,1]$, i.e.~$\eta$ is a uniform limit of step functions.
\end{definition}

Our second main result in this paper deals with general regular hierarchical fields. Nevertheless, let us in particular discuss the choice of $\pmb{\sigma}^0$ and $v$ that corresponds to a constant external magnetic field. To do so, we rewrite the original constant longitudinal magnetic field as follows 
\begin{equation}\label{eq:relh} h \sum_{i=1}^{N}\sigma_i  = hN r_N(\pmb{\sigma},\pmb{\sigma}^0)  \end{equation}
where  $\pmb{\sigma}^0 = (+1,\ldots,+1)$ is the ferromagnetic state. In the hierarchical case one may also think of $\pmb{\sigma}^0$ being the ferromagnetic state, but the free energy in fact does not depend on this reference state. 

Determining the "correct" overlap function is a little more subtle. One might be tempted to pick $\eta(q) = hq$ which yields the analogous relation between the field and the respective overlap as in~\eqref{eq:relh}. Similarly as discussed above, it is more reasonable though to demand that the entropy agrees, i.e.\ the number of (positive) energy states agree on an exponential scales
\[ \lim_{N \to \infty} \frac1N \ln\left( \frac{ \left| \{\pmb{\sigma}: h  r_N(\pmb{\sigma},\pmb{\sigma}^0) > a  \} \right|   }{ \left| \{\pmb{\sigma}: v(q_N(\pmb{\sigma},\pmb{\sigma}^0))  > a  \} \right| }  \right) =1 \]
for any $0 < a < h$. Comparing asymptotics  leads to the choice 
\begin{equation}\label{eq:maghierar} 
  \eta(a) \coloneqq   h \gamma(a ),
 \end{equation}
 where again $\gamma$ is the inverse function  of \eqref{eq:gamma}. Let us  record this as a definition:
 \begin{definition}
 	We call $h(\pmb{\sigma}) = N \eta(q_N(\pmb{\sigma},\pmb{\sigma}^0))$ with reference state $\pmb{\sigma}^0 = (+1,\ldots,+1)$ and overlap function $\eta$ given by \eqref{eq:maghierar} the hierarchical magnetic field of strength $h$. 
 \end{definition}

Our aim in the following is to determine the limit of the pressure 
$ \Phi_N(\beta,\mathfrak{b},h) $
of a Quantum GREM~\eqref{def:qrem} 
where $U$ is a GREM-type random process characterized by $ A $ in \eqref{eq:uGREM}, $h$ is a regular hierarchical field in the sense of Definition~\ref{def:hreg}, and $B$ is a random transversal field whose weights $b_j$ are independent copies of an absolutely integrable variable $\mathfrak{b}$ (see \eqref{def:B}).

To formulate our main result, we need to introduce doubly-cut GREM processes $U^{(y,z)}$ for $0 \leq y \leq z \leq 1$ on the reduced Hamming cube $\mathcal{Q}_{\lceil (z-y)N \rceil}$ with the 
(not normalized) distribution function $A^{(q,z)} \colon [0,z-y] \to [0,1]$,
$ 
A^{(y,z)}(x) \coloneqq A(x+y) - A(y) $. 
The corresponding concave hull and its right derivative are denoted by $\bar{A}^{(y,z)}$ and $\bar{a}^{(y,z)}$. We further set $\varphi^{(y,z)} \colon \rr \times [0,z-y] \to \rr$,
\begin{equation}\label{eq:varphi}
\varphi^{(y,z)}(\beta,x) \coloneqq \beta \sqrt{(2 \ln2)\ \bar{a}^{(y,z)}(x)}  \mathbbm{1}_{x < x^{(y,z)}(\beta)} + 
\left(\frac{\beta^2}{2} \bar{a}^{(y,z)}(x) + \ln2\right)  \mathbbm{1}_{x \geq x^{(y,z)}(\beta)}.
\end{equation}
with 
\begin{equation}
x^{(y,z)}(\beta) \coloneqq \sup \left\{x \ | \ \bar{a}^{(y,z)}(x) > 2 \ln2/ \beta^2 \right\}
\end{equation}
With these preparations we recall from Theorem 1.4 and Theorem 2.8 in \cite{MW20c}  that almost surely
\begin{align}\label{eq:QGREMh=0}
\lim_{N \to \infty} \Phi_N(\beta,\mathfrak{b},0) &=   \sup_{0 \leq z \leq 1} \left[ \int_{0}^{z} \varphi^{(0,1)}(\beta,x) \, dx + (1-z) \ee[\ln 2 \cosh(\beta \mathfrak{b})] \right] \notag \\
& = \sup_{0 \leq z \leq 1} \left[ \int_{0}^{z} \varphi^{(0,z)}(\beta,x) \, dx + (1-z) \ee[\ln 2 \cosh(\beta \mathfrak{b})] \right] .
\end{align}
In the presence of any regular hierarchical field $ h $ (not necessarily with $ \eta $ given by \eqref{eq:maghierar}), this result generalizes as follows. 
\begin{theorem}\label{thm:hierar}
Let $U(\pmb{\sigma})$ be of GREM and $B$ a random transversal field with independent weights~$(b_j)$ sharing the same distribution as $\mathfrak{b}$. Further, let $h(\pmb{\sigma}) = N \eta(q(\pmb{\sigma},\pmb{\sigma}^0))$ be a regular hierarchical field. Then, almost surely:
\begin{equation}\label{eq:hierar}
\begin{split}
\Phi(\beta,\mathfrak{b},h) & \coloneqq \lim_{N \to \infty} \Phi_N(\beta,\mathfrak{b},h) \\ &=\sup_{0 \leq y \leq z \leq 1}  \left[\beta \eta(y) + \int_{0}^{z-y} \varphi^{(y,z)}(\beta,x) \, dx + (1-z) \ee[\ln 2 \cosh(\beta \mathfrak{b})] \right]\\
&= \sup_{0 \leq y \leq z \leq 1} \left[ \beta \eta(y) + \int_{0}^{z-y} \varphi^{(y,1)}(\beta, x) \, dx + (1-z) \ee[\ln 2 \cosh(\beta \mathfrak{b})] \right] .
\end{split}
\end{equation}
\end{theorem}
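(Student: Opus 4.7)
My plan is to reduce Theorem~\ref{thm:hierar} to the longitudinal-field-free result \eqref{eq:QGREMh=0} by exploiting the hierarchical structure of both the GREM and the field $h$. The key geometric observation is that if one forces $\sigma_i = \sigma_i^0$ for $i \le \lfloor yN\rfloor$ and $\sigma_{\lfloor yN\rfloor+1} = -\sigma_{\lfloor yN\rfloor+1}^0$, then $q_N(\pmb{\sigma},\pmb{\sigma}^0) = \lfloor yN\rfloor/N$ exactly, so $h \equiv N\eta(y)$ is constant on this sub-cube while the residual energy process on the remaining $M := (1-y)N-1$ spins is a GREM with renormalised distribution function $\tilde A(q) := (A(y+(1-y)q) - A(y))/(1-y)$ and the residual transversal field acts only on the last $M$ weights. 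Under the change of variable $x = (1-y) x'$ the right derivative of $\bar{\tilde A}$ satisfies $\bar a^{\tilde A}(x') = \bar a^{(y,1)}((1-y)x')$, so $\varphi^{\tilde A}(\beta,x') = \varphi^{(y,1)}(\beta,(1-y)x')$; together with $(1-y)(1-z') = 1-z$ for $z := y+(1-y)z'$, this is the algebraic identity that converts \eqref{eq:QGREMh=0} for the residual QGREM into the integrand of \eqref{eq:hierar}.

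\textbf{Reduction to step $\eta$ and lower bound.} By regularity, $\eta$ is a uniform limit of step functions, and both sides of \eqref{eq:hierar} are $\beta$-Lipschitz in $\eta$ under the uniform norm (for the pressure, by Peierls--Bogoliubov together with $\|h-h_n\|_\infty \le N\|\eta-\eta_n\|_\infty$; for the right-hand side, by direct inspection). It therefore suffices to prove the theorem for $\eta = \sum_{k=1}^K \eta_k\,\mathbbm{1}_{[y_{k-1},y_k)}$. For the lower bound, fix a rational $y$ strictly inside some $[y_{k-1},y_k)$ and let $P_y$ be the orthogonal projection onto the sub-cube described above. Cauchy interlacing yields $\operatorname{Tr} e^{-\beta H_N} \ge \operatorname{Tr}_{\mathcal{H}_y} e^{-\beta P_y H_N P_y}$, and the compressed operator equals $-\beta N\eta(y)\,I$ plus a QGREM on $M$ spins with no longitudinal field and distribution $\tilde A$, to which \eqref{eq:QGREMh=0} applies almost surely. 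Undoing the change of variables produces
\[ \Phi(\beta,\mathfrak{b},h) \ge \beta\eta(y) + \sup_{y \le z \le 1}\left[\int_0^{z-y}\varphi^{(y,1)}(\beta,x)\,dx + (1-z)\,\ee[\ln 2\cosh\beta\mathfrak{b}]\right] , \]
and optimising over rational $y$ together with continuity on each constancy interval of $\eta$ completes this half.

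\textbf{Upper bound (main obstacle).} Partition $\mathcal{Q}_N = \bigsqcup_k \mathcal{S}_k$ with $\mathcal{S}_k := \{\pmb{\sigma}: q_N(\pmb{\sigma},\pmb{\sigma}^0) \in [y_{k-1},y_k)\} \subset \mathcal{A}_{k-1} := \{\pmb{\sigma}: \sigma_i = \sigma_i^0 \text{ for } i \le \lceil y_{k-1}N\rceil\}$. The core difficulty is that $\mathcal{S}_k$ is \emph{not} invariant under the transversal field $B$, so the indicator $\mathbbm{1}_{\mathcal{S}_k}$ does not commute with $e^{-\beta H_N}$ and the trace does not decompose cleanly. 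I would adapt the energy-truncation strategy of~\cite{MW20c} developed for the $h=0$ case: on each $\mathcal{S}_k$, separate configurations according to whether $U(\pmb{\sigma})$ lies below or above an extensive threshold, bound the low-energy part by the classical GREM trace (sharp via extreme-value estimates for Gaussians), and bound the high-energy part by the Golden--Thompson estimate $\operatorname{Tr} e^{-\beta H_N} \le \operatorname{Tr} e^{-\beta(U-h)} e^{\beta B}$ (tight once the low-energy tail has been removed). Because $h \equiv N\eta_k$ throughout $\mathcal{S}_k$ both bounds factor out the scalar $e^{\beta N\eta_k}$, and after enlarging to the $B$-invariant hull $\mathcal{A}_{k-1}$ (a negligible enlargement on the exponential scale) the residual trace is that of an $h=0$ QGREM on $(1-y_{k-1})N$ spins with distribution $A^{(y_{k-1},1)}$, to which \eqref{eq:QGREMh=0} applies. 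Summing over the $K$ blocks and taking $N\to\infty$, the $\log$ collapses to the maximum, which after the scaling computation reproduces the supremum in \eqref{eq:hierar}; the equivalence of the two forms of the right-hand side (with $\varphi^{(y,z)}$ versus $\varphi^{(y,1)}$) then follows from the concave-hull manipulation of Theorem~2.8 of~\cite{MW20c} applied to the shifted distribution $A^{(y,1)}$.
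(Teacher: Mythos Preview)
Your reduction to step $\eta$ and your lower bound are both correct; in fact your lower bound is slightly cleaner than the paper's, since by additionally fixing $\sigma_{\lfloor yN\rfloor+1}=-\sigma^0_{\lfloor yN\rfloor+1}$ you make $h$ \emph{exactly} constant on the subcube and can invoke \eqref{eq:QGREMh=0} directly, whereas the paper uses the Gibbs variational principle with the trial state $|P_k\pmb{\sigma}^0\rangle\langle P_k\pmb{\sigma}^0|\otimes\tilde\rho_{k,\beta}$ and then needs a Borel--Cantelli argument to show $\tfrac{1}{N}\operatorname{tr}[\rho_{k,\beta}h]\to\eta_{k+1}$.

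The upper bound, however, has a genuine gap. Your set $\mathcal{A}_{k-1}=\{\pmb{\sigma}:\sigma_i=\sigma_i^0\text{ for }i\le\lceil y_{k-1}N\rceil\}$ is \emph{not} $B$-invariant: the terms $b_j\pmb{s}_j$ with $j\le\lceil y_{k-1}N\rceil$ flip spins in the frozen block and map $\mathcal{A}_{k-1}$ out of itself. Consequently there is no ``$B$-invariant hull'' short of all of $\mathcal{Q}_N$, and the enlargement you describe does not exist. Without invariance, the diagonal matrix elements $\langle\pmb{\sigma}|e^{-\beta H_N}|\pmb{\sigma}\rangle$ for $\pmb{\sigma}\in\mathcal{S}_k$ involve Feynman--Kac paths that leave $\mathcal{S}_k$ and pick up arbitrary values of $h$, so you cannot factor out $e^{\beta N\eta_k}$. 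The Golden--Thompson and energy-truncation machinery from \cite{MW20c} was built for the case where the diagonal part is a single GREM potential; it does not by itself separate the level sets of $h$ from the action of $B$.

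The paper resolves this by a different mechanism: it surgically modifies $B$ to an operator $B'$ obtained by deleting, in each block $k$, only the hopping terms that connect $p_k\pmb{\sigma}^0$ to its neighbours. This removes $O(N)$ matrix elements, and a Frobenius-norm estimate gives $\|B-B'\|=o(N)$, so the pressure is unchanged in the limit. The point of $B'$ is that it \emph{does} leave each set $\{\pmb{\sigma}:P_k\pmb{\sigma}=P_k\pmb{\sigma}^0,\ p_{k+1}\pmb{\sigma}\neq p_{k+1}\pmb{\sigma}^0\}$ invariant, and $h$ is constant there; the trace of $e^{-\beta(U-h-B')}$ then decomposes exactly, and positivity of matrix elements (after reducing to $b_j\ge 0$) allows one to drop the constraint $p_{k+1}\pmb{\sigma}\neq p_{k+1}\pmb{\sigma}^0$ and replace $B'$ by the full $B^{(2,q_k)}$, arriving at the cut QGREM to which \eqref{eq:QGREMh=0} applies. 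This truncation $B\to B'$ is the missing idea in your upper bound.
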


Remarkably, the transversal field $B$ and the hierarchical field $h$ affect the glass phase quite differently. While the hierarchical field tends to shrink the glass region in its most correlated sector first (it acts from the "left"), the transversal field begins by changing the unfrozen region and the less correlated sector (it acts from the "right"). We will further discuss the consequences of  our second main result, Theorem~\ref{thm:hierar}, in the next subsection and spell out its proof only in Section~\ref{sec:proof} below.

\subsection{Instability of the glass phase in the QGREM with longitudinal hierarchical field}\label{sec:atl}

 If $A = \bar{A}$, i.e.~$A$ is a concave function, $\varphi^{(y,1)}$ is a just a translation of $\varphi^{(0,1)}.=: \varphi $ such that
	\begin{equation}\label{eq:hierar2} \Phi(\beta,\mathfrak{b},h) =  \sup_{0 \leq y \leq z \leq 1}\left[ \beta \eta(y) + \int_{y}^{z} \varphi(\beta, x) \, dx + (1-z) \ee[\ln 2 \cosh(\beta \mathfrak{b})] \right], \end{equation}
	with 
	\[
	\varphi(\beta, x) = \beta \sqrt{(2 \ln2)\ \bar{a}(x)}  \mathbbm{1}_{x < x(\beta)} + 
\left(\frac{\beta^2}{2} \bar{a}(x) + \ln2\right)  \mathbbm{1}_{x \geq x(\beta)} , \quad  x(\beta)\coloneqq \sup \left\{x \ | \ \bar{a}(x) > (2 \ln2)/ \beta^2 \right\} .
\] 
On the other hand, if $A$ is not concave (which is always the case if $A$ is a step function) the behaviour of $\varphi^{(y,1)}$ is more subtle as one has to take into account that the slope of the concave hull's linear segments will change as $y$ increases. In particular, \eqref{eq:hierar2} does not necessarily hold true. In contrast to a transversal field, a hierarchical field might lead to a change of the determining concave hull. As discussed in~\cite{DG86a} this would happen for a hierarchical caricature of a $ p $-spin glass with $ p > 2 $.
	
For an explicit prediction on the de Almeida-Thouless (AT) line we will now focus on the case that $A = \bar{A}$ is continuously differentiable with derivative $ \bar{a} $. 
Then for any hierarchical field with an overlap function $\eta(\cdot) = h v(\cdot)$ with $h \geq 0$ and $v \geq 0$ an increasing function, the supremum in \eqref{eq:hierar2} 
is attained for  fixed $\beta \geq 0$ at some $y(\beta,h)$ which is an increasing function of $h$. Since the critical temperature $T_c = \beta_c^{-1}$ only depends on $\bar{a}(y(\beta,h))$, it is thus a decreasing function of $ h $ and not increasing as in the QREM. \\

To be more specific, let us focus on the case of the hierarchical magnetic field $ \eta = h \gamma $ of strength $h > 0 $. We will proceed step by step, first discussing the limiting cases.\\

\noindent
	\textbf{Vanishing transversal field $\mathfrak{b} = 0 $:} \\[1ex]
In this case, a straightforward differentiation shows that the supremum in~\eqref{eq:hierar2} is attained at $ z = 0 $ and $ y= y(\beta, h) \in (0,1) $, which for fixed $ \beta > 0 $ and $ h > 0 $ is the unique solution of the equation
\begin{equation}\label{eq:y}
 y   = k\left( \frac{\varphi\left(\beta, y\right)}{\beta h } \right)  ,
\end{equation} 
where $ k: [0,\infty) \to (0,1] $ is the inverse function of the derivative $ \gamma' :(0,1]\to [0,\infty) $ of $ \gamma $. The uniqueness of the solution is most easily seen using the explicit form
\[
k(x) = \begin{cases} 1 , & x= 0, \\  \frac{1}{x} \tanh\frac{\ln 2}{x}  - \frac{1}{\ln 2} \ln \cosh\frac{\ln2}{x} , &  x>0 . \end{cases}
\]
from which we conclude the fact that $ k $ is continuous and monotone decreasing. More precisely, since $y \mapsto \varphi( \beta, y) $ is continuous and monotone decreasing as well with limiting values $\varphi(\beta, 0) \geq  \varphi(\beta, 1) = \beta^2 \bar{a}(1)/ 2 + \ln 2 > 0 $, the solution to~\eqref{eq:y} exists and is unique.

A low-temperature glass phase occurs in this case if and only if  $ y(\beta,h) <  x(\beta) $. 
Clearly, this is only possible in case $ x(\beta) > 0 $, i.e.\  for temperatures below the critical temperature at $ h= 0 $, whose inverse is given by
\[
\beta_c :=  \sqrt{\frac{2 \ln 2}{  \bar{a}(0) }} ,
\] 
Since $[\beta_c,\infty) \ni \beta \mapsto x(\beta) $ is monotone increasing and right-continuous and $ \varphi(\beta,x(\beta)) = 2 \ln 2 $, the inverse critical temperature at $h > 0 $ is then well defined thought the requirement
\begin{equation}\label{eq:defcritbeta} 
  \beta_c(h) \coloneqq  \inf \left\{\beta \ | \ x(\beta) > k\left(2 \ln2/ (\beta h)\right) \right\} . 
\end{equation}
The function $ h \mapsto \beta_c(h) $ is referred to as the AT line. 
We record some elementary properties of the AT line and also of the solution of~\eqref{eq:y}  for future purposes in the following proposition.
Of particular interest is the critical exponent of the AT line $T_c(h) = \beta_c(h)^{-1} $ near $ h = 0 $. It is determined by the asymptotic behaviour of $ \bar{a}(x)  $ near $ x = 0 $.
To facilitate notation, we write $  x(t) \propto y(t) \; ( t \to t_0) $ if and only if $ \lim_{t\to t_0 } \frac{x(t)}{y(t) } \in (0,\infty) $ exists. For the determination of the critical exponent, we add the following assumption, which may be satisfied or not.
\begin{assumption}\label{ass:a}
For  $ \alpha > 0 $: \quad 
$
\bar{a}(0) - \bar{a}(x) \propto  x^\alpha  \quad ( x \downarrow 0) 
$.
\end{assumption}
E.g.\ in the SK-caricature case $ A = \gamma^2 $, we have $ \bar{a}(0) = 2 \ln 2 $, which yields the correct critical temperature $ \beta_c = 1 $ of the SK-model, and $ \alpha = 1 $. As is spelled out in~\eqref{eq:critexp}, this leads to the critical exponent $2 $  of the AT-line for small transversal fields. This differs from the known asymptotics $ T_c - T_c(h) \propto h^{2/3} \quad ( h \downarrow 0 ) $ of the AT-line  in the original SK-model as already noted  in~\cite{DG86a}. 
\begin{proposition}\label{prop:bsk}
Suppose that $A = \bar{A}$ is continuously differentiable with derivative $ \bar{a} $.
\begin{enumerate}
\item The inverse critical temperature $\beta_c(h) $ is monotone increasing in $ h $. Its limiting values are $ \lim_{h\downarrow 0} \beta_c(h) = \beta_c$ and
\[  \lim_{h\to \infty} \beta_c(h) = \begin{cases} \infty& \mbox{if}\;   \bar{a}(1) = 0 , \\  \frac{2 \ln 2}{  \bar{a}(1)} &  \mbox{if}\;  \bar{a}(1) > 0. \end{cases} \]
In the situation of Assumption~\ref{ass:a} the critical temperature satisfies:
\begin{equation}\label{eq:critexp}
T_c - T_c(h) \propto h^{2\alpha} \quad ( h \downarrow 0 ).
\end{equation}
 \item 
For any $ \beta \in (0,\infty) $ and $ h> 0 $ the unique solution $ y(\beta,h) $ of~\eqref{eq:y} enjoys the following properties: 
\begin{enumerate}
\item For fixed $ \beta  \in (0,\infty) $ the function $  (0,\infty) \ni  h \mapsto y(\beta, h) $ is continuous and increasing in $ h $ for any $ \beta > 0 $ with limiting values $ \lim_{h\downarrow 0} y(\beta,h) = 0 $ and $\lim_{h\to \infty} y(\beta,h) = 1$. Moreover, 
\begin{equation}\label{asympy}
 y(\beta,h) =   \mathcal{O}(h^3 ) + h^2 \times \begin{cases}  \frac{ \beta^2}{ 2\ln 2}  (1+ \beta^2/\beta_c^2)^{-2}  , &\quad \beta < \beta_c \\[.5ex]
		 \frac{\beta_c^2}{8 \ln 2 } , & \quad \beta > \beta_c 
	\end{cases} 
\end{equation}
	for small $ h $. 
\item The function $  (0,\infty) \ni  h \mapsto  \varphi\left(\beta, y(\beta,h) \right) $ is continuous and decreasing. Moreover, at any  $ \beta > 0 $ its limiting values is $ \lim_{h\downarrow 0} \varphi\left(\beta, y(\beta,h) \right)  = \varphi\left(\beta, 0\right) $. 
\end{enumerate}
\end{enumerate}
\end{proposition}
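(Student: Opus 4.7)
The plan is to exploit two monotonicity facts throughout: $\bar a$ is non-increasing as the derivative of a concave function, and $k = (\gamma')^{-1}$ is continuous and strictly decreasing with $k(0^+) = 1$ and, by Taylor-expanding $\tanh t$ and $\ln \cosh t$ near $t = 0$ in the explicit formula, the large-argument asymptotics $k(x) = \ln 2 / (2 x^2) + \mathcal{O}(x^{-4})$ as $x \to \infty$. Combined with continuity of $\bar a$, this identifies $x(\beta)$ as the unique solution of $\bar a(x(\beta)) = 2 \ln 2 / \beta^2$ for $\beta \in [\beta_c, \infty)$, with $x(\beta) \uparrow 1$ as $\beta \to \infty$ iff $\bar a(1) = 0$.

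For Part 1, monotonicity of $\beta_c(h)$ reads off directly from \eqref{eq:defcritbeta}: raising $h$ strictly lowers $2 \ln 2 / (\beta h)$ and hence raises $k(2 \ln 2 / (\beta h))$, so the threshold $x(\beta) > k(2 \ln 2 / (\beta h))$ is crossed only at a larger $\beta$. The two limits follow by sending the argument of $k$ to $\infty$ (so $k \to 0$, recovering $\beta > \beta_c$) or to $0$ (so $k \to 1$, forcing $x(\beta) \to 1$, which requires $\beta \to \infty$ exactly when $\bar a(1) = 0$, and $\beta^2 \geq 2 \ln 2 / \bar a(1)$ otherwise). For the critical exponent, evaluate $\bar a(x(\beta_c(h))) = 2 \ln 2 / \beta_c(h)^2$ and invoke Assumption~\ref{ass:a}: since $\bar a(0) = 2 \ln 2 / \beta_c^2$, one obtains $x(\beta_c(h))^\alpha \propto \beta_c(h)^2 - \beta_c^2$. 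The $k$-asymptotics applied at $x(\beta_c(h)) = k(2 \ln 2 / (\beta_c(h) h))$ meanwhile gives $x(\beta_c(h)) \sim \beta_c^2 h^2 / (8 \ln 2)$ as $h \downarrow 0$. Combining yields $\beta_c(h)^2 - \beta_c^2 \propto h^{2\alpha}$, and hence $T_c - T_c(h) \propto h^{2\alpha}$.

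For Part 2(a), set $G(y, h) \coloneqq k(\varphi(\beta, y) / (\beta h))$. Since $\varphi(\beta, \cdot)$ and $k$ are both decreasing, $G(\cdot, h)$ is non-decreasing in $y$; clearly $G(y, \cdot)$ is strictly increasing in $h$. Given uniqueness of the crossing $y(\beta, h)$ of the diagonal by $G(\cdot, h)$ established in the text, a standard monotone-crossing argument yields strict monotonicity of $h \mapsto y(\beta, h)$: for $h_1 < h_2$, $G(y(\beta, h_1), h_2) > y(\beta, h_1)$, and uniqueness forces the crossing for $h_2$ to lie strictly to the right. Continuity then follows from continuity of $G$ and uniqueness. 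The limits $y(\beta, h) \to 0$ as $h \downarrow 0$ and $y(\beta, h) \to 1$ as $h \to \infty$ follow from the pointwise limits of $G$. The small-$h$ expansion comes from plugging the $k$-asymptotics into \eqref{eq:y}: $y(\beta, h) = \ln 2 \cdot \beta^2 h^2 / (2 \varphi(\beta, 0)^2) + \mathcal{O}(h^4)$, where $\varphi(\beta, 0) = \ln 2 (1 + \beta^2/\beta_c^2)$ for $\beta < \beta_c$ (second branch, since $x(\beta) = 0$) and $\varphi(\beta, 0) = 2 \ln 2 \cdot \beta / \beta_c$ for $\beta > \beta_c$ (first branch, using $\bar a(0) = 2 \ln 2 / \beta_c^2$), matching \eqref{asympy}. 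Part 2(b) is then immediate: $\varphi(\beta, y(\beta, h))$ is the composition of a continuous non-increasing function of $y$ with a continuous increasing function of $h$, hence continuous and non-increasing in $h$, with limit $\varphi(\beta, 0)$ by continuity.

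The main obstacle is the critical exponent: the $k$-asymptotics must be invoked with enough uniformity in $\beta$ near $\beta_c$ to justify substituting $\beta_c(h) \to \beta_c$, and Assumption~\ref{ass:a} must be converted into a two-sided estimate on $x(\beta_c(h))$ using $\bar a$-continuity to guarantee that the sup defining $x(\beta)$ is attained. The boundary case $\beta = \beta_c$, where both branches of $\varphi(\beta, 0)$ coincide at $2 \ln 2$, also deserves a brief explicit check.
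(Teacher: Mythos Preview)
Your proposal is correct and follows essentially the same route as the paper's proof: both exploit the monotonicity of $\bar a$ and $k$ to read off monotonicity of $\beta_c(h)$ and $y(\beta,h)$ from the defining equations, both compute the limits via the endpoint behaviour $k(0^+)=1$ and $k(\infty)=0$, and both derive the critical exponent by combining the large-argument asymptotics $k(x)\sim (\ln 2)/(2x^2)$ with Assumption~\ref{ass:a} to obtain $x(\beta)\propto(\beta-\beta_c)^{1/\alpha}$. Your monotone-crossing argument for $h\mapsto y(\beta,h)$ and your closing remarks on uniformity and the boundary case $\beta=\beta_c$ make explicit a couple of points the paper passes over, but the underlying ideas are identical.
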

The proof of this proposition consists again of multiple lengthy, but elementary computations, which are sketched in the appendix.\\

\noindent
	\textbf{Vanishing hierarchical longitudinal field $h = 0 $:} \\[1ex]
It was shown in Corollary~1.5 of~\cite{MW20c} that in case $ h = 0 $ and a constant transversal field $\mathfrak{b} = \Gamma$ of strength $\Gamma > 0$ the supremum in~\eqref{eq:hierar2} is attained at $ y = 0 $ and $ z = z(\beta, \Gamma) \in [0,1] $ given by
\begin{equation}\label{def:z}
z(\beta, \Gamma) := \begin{cases}  1 & p(\beta \Gamma) \leq s(\beta) = \varphi(\beta,1)  \\
    g_{\beta}(p(\beta \Gamma)) & s(\beta) < p(\beta \Gamma) < t(\beta) := \varphi(\beta,0) \\
    0 & t(\beta)\geq  p(\beta \Gamma) .
    \end{cases}
\end{equation}
Here  $g(\beta,\cdot): [s(\beta),t(\beta)] \to  [0,1]  $ is the generalized inverse of $\varphi(\beta,\cdot) $, which maximizes $ z(\beta,\Gamma) $ and 
\[ p(\beta \Gamma):= \ln 2 \cosh(\beta \Gamma) , \]  
is the pressure of a pure quantum paramagnet.
As a consequence, the pressure $ \Phi(\beta,\Gamma,0) $ has a magnetic transition at 
\[ \Gamma_c(\beta,0) :=\frac{1}{\beta} \arcosh\left(\tfrac{1}{2} e^{t(\beta)} \right) \]
and possibly a second magnetic transition at $
\Gamma_c^{(1)}(\beta) := \frac{1}{\beta} \arcosh\left(\tfrac{1}{2} e^{s(\beta)} \right) $ depending on whether $ \bar{a}(1) > 0 $ or equivalently $ s(\beta) > \ln 2 $ or not.
In the regime $ \Gamma < \Gamma_c(\beta,0) $ a glass transition occurs at fixed inverse temperature~$ \beta_c $. 

In case of the SK-caricature for which $ \bar{a}(1) = 0 $, neither the value of the location of the quantum phase transition at zero temperature, $ \lim_{\beta \to \infty} \Gamma_c(\beta,0) = \sqrt{(2\ln 2) \bar{a}(0) } = 2 \ln 2  \approx 1.38\dots $ agrees with the perturbative or numerical prediction of approximately $ 1.51 $ in~\cite{YI86,You17}, nor does the behaviour of $ \Gamma_c(T^{-1},0) $ near $ T= 0 $ agree with the  $  T^2 $-scaling predicted in~\cite{HM93}. Presumably, this is a defect of the hierarchical implementation of glass. \\

\noindent
	\textbf{Constant longitudinal and transversal field:} \\[1ex]
To determine the pressure $ \Phi(\beta,\Gamma,h) $ in the general case of a constant transversal and longitudinal field $ \Gamma, h > 0 $, we also need to discuss the behaviour of the variational expression~\eqref{eq:hierar2}  at the diagonal $ y = z $, which corresponds to the situation without a GREM.  In this case,  the supremum is attained at 
\begin{equation}\label{eq:s_0}
 \sigma(\beta,\Gamma,h)   :=  k\left(  \frac{p(\beta \Gamma)}{\beta h } \right)  ,
\end{equation}
Note that the condition $ p(\beta\Gamma) < \varphi(\beta,y(\beta,h)) $ ensures $ y(\beta,h) < z(\beta,h) $ by the strict monotonicity of $ g_\beta $. 
These findings then yield to the following explicit expression for the pressure in the general case. 
\begin{corollary}\label{cor:hG}
Suppose that $A = \bar{A}$ is continuously differentiable. For the constant transversal field of strength $ \Gamma > 0 $ and the hierarchical magnetic field  $h(\pmb{\sigma}) = N h \gamma(q(\pmb{\sigma},\pmb{\sigma}^0) )$ of strength $ h > 0 $ the pressure is almost surely
\[
\Phi(\beta,\Gamma,h) = \begin{cases} \displaystyle \beta h \gamma\left( y(\beta,h) \right) + \int_{ y(\beta,h)}^{z(\beta, \Gamma)} \varphi(\beta,x) dx + \left(1-z(\beta, \Gamma)\right) p(\beta\Gamma) , &  p(\beta\Gamma) < \varphi(\beta,y(\beta,h)) \\ 
 \beta h \gamma\left( \sigma(\beta,\Gamma,h)\right)  + \left(1-\sigma(\beta,\Gamma,h) \right) p(\beta\Gamma) , &  p(\beta\Gamma) \geq \varphi(\beta,y(\beta,h)) ,
 \end{cases}
\]
 where $y(\beta,h ) $, $ z(\beta,\Gamma) $ and $ \sigma(\beta,\Gamma,h) $ are specified in \eqref{eq:y}, \eqref{def:z} and \eqref{eq:s_0} respectively. 
\end{corollary}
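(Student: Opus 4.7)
The starting point is Theorem~\ref{thm:hierar} together with the simplified representation~\eqref{eq:hierar2}, which applies here because $A = \bar A$ is concave. Specializing to $\eta = h\gamma$ and $\mathfrak{b} = \Gamma$ constant, and introducing the primitive $\Psi(t) := \int_0^t \varphi(\beta, x)\,dx$, the variational formula takes the separable form
\[
\Phi(\beta,\Gamma,h) = \sup_{(y,z) \in T} \bigl[f(y) + g(z)\bigr], \qquad f(y) := \beta h\, \gamma(y) - \Psi(y), \quad g(z) := \Psi(z) + (1-z)\,p(\beta\Gamma),
\]
on the triangle $T := \{(y,z) : 0 \le y \le z \le 1\}$. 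The key observation is that the objective is a sum of a function of $y$ and a function of $z$, so the plan is to locate the unconstrained maximizers of $f$ and $g$ individually and then decide in each case whether the constraint $y \leq z$ is active.

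For the first step I would compute $f'(y) = \beta h\, \gamma'(y) - \varphi(\beta,y)$ and recognize the stationarity condition as precisely~\eqref{eq:y} after applying $k = (\gamma')^{-1}$, which by Proposition~\ref{prop:bsk} yields a unique unconstrained maximizer $y^\ast = y(\beta,h) \in (0,1)$. Similarly, $g'(z) = \varphi(\beta,z) - p(\beta\Gamma)$ combined with the monotonicity of $\varphi(\beta,\cdot)$ identifies the unconstrained maximizer as $z^\ast = z(\beta,\Gamma)$ from~\eqref{def:z}, with the three-way case split there absorbing the boundary readings. In particular $f$ is increasing on $[0,y^\ast]$ and decreasing on $[y^\ast,1]$, and $g$ is increasing on $[0,z^\ast]$ and decreasing on $[z^\ast,1]$.

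In Case~1, the hypothesis $p(\beta\Gamma) < \varphi(\beta,y(\beta,h))$ translates via the monotonicity of $\varphi$ into $y^\ast < z^\ast$, so $(y^\ast, z^\ast) \in T$ is the global maximizer of $F := f+g$ by separability, and substitution produces the first displayed formula. In Case~2, $y^\ast \geq z^\ast$ and the unconstrained maximum is infeasible, so the supremum must be attained on the boundary of $T$. Along the diagonal $y = z$ the restriction $F(t,t) = \beta h\,\gamma(t) + (1-t)\,p(\beta\Gamma)$ has a unique critical point determined by $\beta h\,\gamma'(t) = p(\beta\Gamma)$, namely $t = \sigma(\beta,\Gamma,h)$, which by strict monotonicity of $\gamma'$ is the maximum along the diagonal.

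The remaining, and main, technical step is to verify that this diagonal maximum dominates the maxima on the other two boundary segments $\{y = 0\}$ and $\{z = 1\}$, where separability localises the reduced one-dimensional optima at $(0, z^\ast)$ and $(y^\ast, 1)$ respectively. I would carry this out via the one-variable function $H(t) := f(t) + g(t)$: since $H' = f' + g'$ is nonnegative on $[0, z^\ast]$ and nonpositive on $[y^\ast, 1]$, one deduces $\sigma \in [z^\ast, y^\ast]$, and hence $H(\sigma) \geq H(z^\ast)$; combining with $f(z^\ast) \geq f(0)$ (from unimodality of $f$ with $z^\ast \leq y^\ast$) gives $F(\sigma,\sigma) \geq F(0, z^\ast)$. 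The symmetric argument, using $H(\sigma) \geq H(y^\ast)$ together with $g(y^\ast) \geq g(1)$, yields $F(\sigma,\sigma) \geq F(y^\ast, 1)$, closing the comparison and producing the second displayed formula.
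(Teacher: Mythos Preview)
Your proof is correct and follows essentially the same approach as the paper: rewrite \eqref{eq:hierar2} as a separable objective, identify the individual unconstrained maximizers $y(\beta,h)$ and $z(\beta,\Gamma)$, and split into cases according to whether the constraint $y\le z$ binds. In fact your treatment of Case~2 is more complete than the paper's, which simply asserts that once $y(\beta,h)\ge z(\beta,\Gamma)$ the maximizer must lie on the diagonal; your comparison of $H(\sigma)$ against the boundary values $F(0,z^\ast)$ and $F(y^\ast,1)$ via the unimodality of $f$, $g$ and $H$ supplies the justification that the paper omits.
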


Let us now discuss the physical significance of this formula. 
In case $ h > 0 $ the pressure in Corollary~\ref{cor:hG} changes its nature at $  \varphi(\beta,z(\beta,\Gamma))  = p(\beta\Gamma) = \varphi(\beta,y(\beta,h)) $, i.e.\ at
\[ 
\Gamma_c(\beta,h) := \frac{1}{\beta} \arcosh\left(\tfrac{1}{2} e^{\varphi(\beta,y(\beta,h)) } \right) . 
\]
By strict monotonicity of $ p $,  the condition $ \Gamma < \Gamma_c(\beta,h) $ is equivalent to $ p(\beta\Gamma)  < \varphi(\beta, y(\beta,h) )$ and hence $y(\beta,h) \leq  z(\beta,\Gamma) $.

The magnetization in the transversal direction 
\[
m_x(\beta,\Gamma,h) := \frac{1}{\beta} \frac{\partial}{\partial \Gamma} \Phi(\beta,\Gamma,h) = \begin{cases} 
(1-z(\beta,\Gamma) ) \tanh\beta\Gamma , & p(\beta\Gamma) <  \varphi(\beta,y(\beta,h)) ,\\
 (1-\sigma(\beta,\Gamma,h) ) \tanh\beta\Gamma , & p(\beta\Gamma) \geq  \varphi(\beta,y(\beta,h)) ,
\end{cases} 
\]
changes continuously through the transition line $ \Gamma = \Gamma_c(\beta,h) $. Only its second derivative is generally discontinuous. Note that the magnetization in $ x $-direction neither attains its maximum value $  \tanh(\beta\Gamma)$ of the pure quantum paramagnetic phase in the regime $ \Gamma > \Gamma_c(\beta,h)   $ nor does it vanish for $  \Gamma <  \Gamma_c(\beta,h)  $. Similarly as in the case $ h = 0 $ covered in~\cite{MW20c}, the transversal magnetization vanishes only at $
\Gamma_c^{(1)}(\beta)  $, which is equal to zero in case $ \bar{a}(1) = 0 $. 
The critical magnetic field is continuous in $ h $, and one recovers the limiting value 
$\lim_{h\downarrow 0} \Gamma_c(\beta,h) = \Gamma_c(\beta,0)$ for any $ \beta \in (0,\infty) $.
A straightforward Taylor expansion and~\eqref{asympy} imply that in the situation of 
Assumption~\ref{ass:a}:
\begin{equation}\label{eq:critG}
\Gamma_c(\beta,0) -\Gamma_c(\beta,h) \sim h^{2\alpha}  \quad ( h \downarrow 0 ).
\end{equation}
In fact, this even holds in the zero temperature limit $ \beta \to \infty $, i.e for the so called Quantum AT line which is plotted in Figure~\ref{fig:QAT}. 
\begin{figure}[ht]
	\begin{center}
		\includegraphics[width=.42\textwidth]{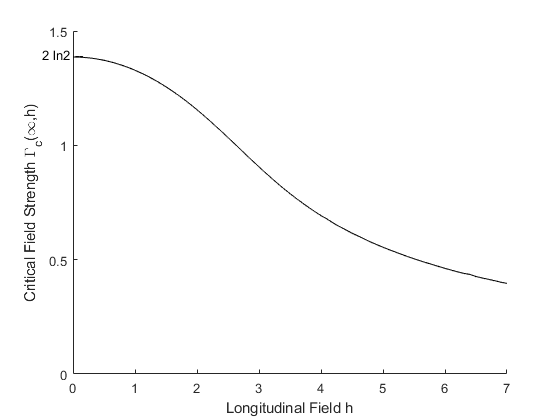}
	
		\caption{Plot of the Quantum AT line, i.e.~the dependence of the critical transversal field $\Gamma_c(\beta,h)$ on the longitudinal field $ h $ for zero temperature, $ \beta = \infty $. }\label{fig:QAT}
	\end{center}
\end{figure}

A low-temperature glass phase occurs if and only if 
\[
 y(\beta,h) < \min\left\{ x(\beta), z(\beta,\Gamma) \right\} .
\]
Clearly, this is only possible if two conditions are satisfied simultaneously:
\begin{enumerate}
\item $ z(\beta,\Gamma) >  y(\beta,h)  $, i.e.~for transversal fields $ \Gamma < \Gamma_c(\beta,h) $.
From the monotonicity of $ h \mapsto \varphi(\beta,y(\beta,h)) $,  we conclude, 
 $ \Gamma_c(\beta,h) \leq \Gamma_c(\beta, 0) $
for any $ \beta , h > 0 $. 
\item $  x(\beta) >   y(\beta,h) $, i.e.\  for $ \beta > \beta_c(h) $ given by~\eqref{eq:defcritbeta}, which we already identified as a monotone increasing function of $ h $.
\end{enumerate}
We thus conclude, that the presence of the transversal field $ h > 0 $ shrinks the spin glass' low-temperature phase. Qualitatively this behaviour is in accordance with the numerical findings in case of the Quantum SK-model \cite{You17}.  However, as already noted in~\cite{DG86a} in the classical case $ \Gamma = 0 $, the critical exponents do not agree.
Figure~\ref{fig:phaseGREM} plots the temperature-transversal field  phase diagram for different values of $ h $ in case $ A = \bar{A} $ and $ \bar{a}(1) = 0 $.
\begin{figure}[ht]
	\begin{center}
		\includegraphics[width=.43\textwidth]{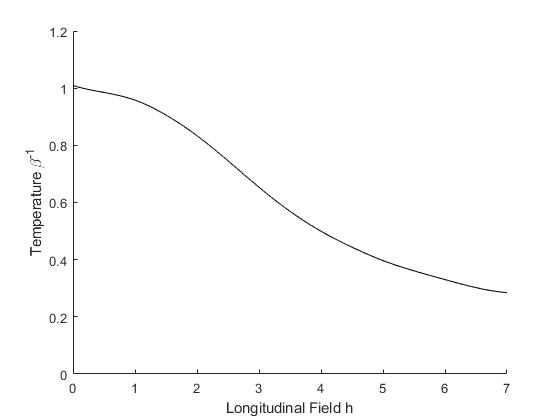}
		\includegraphics[width=.53\textwidth]{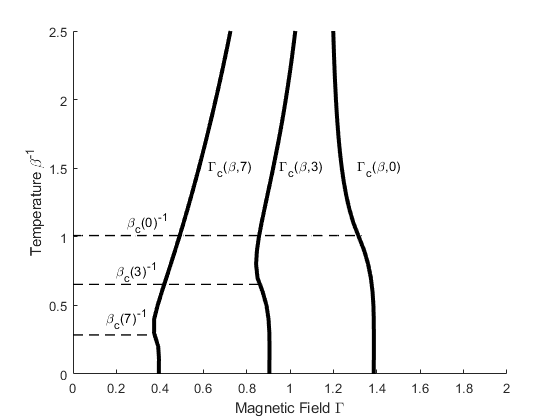}
		\caption{On the left is a plot of the critical temperature $ \beta_c(h) $ as a function of the longitudinal field. On the right figure is the $T-\Gamma$ phase diagram with the critical magnetic field $\Gamma_c(\beta,\Gamma)$ as well as the critical temperature $\beta_c(h)^{-1} $ evaluated at $h = 0,3,7$. }\label{fig:phaseGREM}
	\end{center}
\end{figure}

We finally close this section by pointing out that the expression for the pressure in case $ p(\beta\Gamma) \geq  \varphi(\beta,y(\beta,h)) $ agrees with that of the hierarchical field $ h $ plus a constant transversal field $ \Gamma $. It should be compared to the exact solution $ p(\beta\sqrt{h^2 +\Gamma^2})  $ without the hierarchical implementation of the longitudinal field and agrees qualitatively. 

The magnetization in the longitudinal direction is given by
\[
m_z(\beta,\Gamma,h) := \frac{1}{\beta} \frac{\partial}{\partial h} \Phi(\beta,\Gamma,h) = \begin{cases} \gamma\left( y(\beta,h)\right) ,
 & p(\beta\Gamma) <  \varphi(\beta,y(\beta,h)) \\
 \gamma\left(\sigma(\beta,\Gamma,h) )\right) , & p(\beta\Gamma) \geq  \varphi(\beta,y(\beta,h)) 
\end{cases} 
\]
and varies continuously through both the glass and the magnetic transitions.

\section{Proof of Theorem~\ref{thm:hierar}}\label{sec:proof}

Let us first remark 
that the  last equality in \eqref{eq:hierar} already follows from results in \cite{MW20c}. Indeed, fix any $y \in [0,1)$ and consider the Hamiltonian 
 \[ H^{(y)} := U^{(y,1)} - B^{2,y} \]
 on the reduced Hilbert space $\ell^2(\mathcal{Q}_{\lceil (1-y)N \rceil})$, where $ U^{(y,1)} $ is the cut GREM corresponding to $ A^{(y,1)} $ and $B^{(2,y)}$ denotes the cut transversal field acting only on spins in $ \mathcal{Q}_{\lceil (1-y)N \rceil} $
 \begin{equation}\label{eq:cutB}  
 	B^{(2,y)} \coloneqq  \sum_{i=\lceil y N \rceil+1}^{N} b_i \pmb{s}_i   ,
 \end{equation}
 and we set $B^{(1,y)} \coloneqq B-B^{(2,y)}$.
   Then, Theorem 2.8 in \cite{MW20c} implies 
 \[ \lim_{N \to \infty} \frac1N \ln \tr e^{-\beta H^{(y)}} = \sup_{y \leq z \leq 1}\left[ \int_{0}^{z-y} \varphi^{(y,z)}(x) \, dx + (1-z) \ee[\ln 2 \cosh(\beta \mathfrak{b})] \right] ,  \]
 whereas an application of Theorem 1.4 in \cite{MW20c}  yields
 \[ \lim_{N \to \infty} \frac1N \ln \tr e^{-\beta H^{(y)}} = \sup_{y \leq z \leq 1}\left[ \int_{0}^{z-y} \varphi^{(y,1)}(x) \, dx + (1-z) \ee[\ln 2 \cosh(\beta \mathfrak{b})] \right] .  \]
In both cases, the supremum is taken over $ z \in [y,1] $ at fixed $ y $, which proves  the second equality in \eqref{eq:hierar}. 
We now spell out the proof of the first equality in \eqref{eq:hierar}.

\begin{proof}[Proof of Theorem~\ref{thm:hierar}]
	We will proceed in three steps. \\[1ex]
	\noindent
	\textbf{Step 1: Reduction to step functions}  \nopagebreak \\[1ex]
	We claim that it is enough to show Theorem~\ref{thm:hierar}
	for step functions $\eta$. This follows if we can prove that the left and right side of \eqref{eq:hierar} are continuous with respect to $\eta$ in the uniform norm. This is, however, trivial for the right side, and a simple operator norm bound implies 
	for two hierarchical fields $h,h'$ with overlap functions $\eta,\eta'$,
	\[\frac1N \left|\ln \tr e^{-\beta(U - h - B)} -\ln \tr e^{-\beta(U -h'-B)}\right| \leq \beta \|\eta-\eta'\|_{\infty}.   \]
	 From now on, we will therefore only consider step functions $\eta$, i.e.\ we assume that there exist points $0 = q_0 < q_1 < \cdots q_m = 1$ 
	and real numbers $\eta_1,\ldots,\eta_m$ such that $\eta(x) = \eta_k$ for $q_{k-1} \leq x < q_k$ and $\eta(1) = \eta_m$. The points $q_k$ define blocks of spin vectors $\pmb{\sigma}_k \in \mathcal{Q}_{\lceil q_k N \rceil - \lceil q_{k-1} N \rceil}$, and we will write $\pmb{\sigma} = \pmb{\sigma}_1 \pmb{\sigma}_2 \cdots \pmb{\sigma}_m$. Moreover, it is convenient to introduce for $k = 1,\ldots m$ the projections $P_k$ and $p_k$:
	\[ P_k \pmb{\sigma} = P_k \pmb{\sigma}_1 \pmb{\sigma}_2 \cdots \pmb{\sigma}_m \coloneqq \pmb{\sigma}_1\cdots \pmb{\sigma}_k,  \qquad  p_k \pmb{\sigma} = p_k \pmb{\sigma}_1 \pmb{\sigma}_2 \cdots \pmb{\sigma}_m \coloneqq \pmb{\sigma}_k.  \]
	Moreover, we set $P_0 \pmb{\sigma} = p_0 \pmb{\sigma} \coloneqq 1$.
	 Finally, we note that due to the fact that $ \eta $ only takes finitely many values, we may restrict the variational expression \eqref{eq:hierar} to the maximum over points $y= q_k$: 
	\[ \begin{split} &\sup_{0 \leq y \leq z \leq 1}\left[  \beta \eta(y) + \int_{0}^{z-y} \varphi^{(y,1)}(x) \, dx + (1-z) \ee[\ln 2 \cosh(\beta \mathfrak{b})] \right] \\ =& \max_{k=0,\ldots,m-1} \sup_{q_k \leq z \leq 1} \left[ \beta \eta_{k+1}  + \int_{0}^{z-q_k} \varphi^{(y,1)}(x) \, dx + (1-z) \ee[\ln 2 \cosh(\beta \mathfrak{b})] \right] .\end{split} \]
		
	\noindent
	\textbf{Step 2: Lower bound} \\[1ex]
	Our lower bound on the presssure is based on Gibbs' variational principle~\cite{Rob67}. 
	We pick some $k \in \{1,\ldots,m \}$ and consider 	on the subspace $\ell^2(\mathcal{Q}_{N- \lceil q_k N \rceil })$ the Hamiltonian:
	\begin{equation}\label{def:HUk}
	H^{(k)} := U^{(k)}  - B^{2,q_k}, \qquad U^{(k)}(\pmb{\sigma}_{k+1}\cdots \pmb{\sigma}_m) := U((P_k \pmb{\sigma}^0)\pmb{\sigma}_{k+1}\cdots \pmb{\sigma}_m) . \end{equation}
 We denote by $\tilde{\rho}_{k,\beta}$ the corresponding Gibbs state at inverse temperature $ \beta $. The density matrix $\tilde{\rho}_{k,\beta}$ has the extension $\rho_{k,\beta} := |  P_k \pmb{\sigma}^0 \rangle \langle  P_k \pmb{\sigma}^0 | \otimes  \tilde{\rho}_{k,\beta} $ to the full space $\ell^2(\mathcal{Q}_{N}) = \ell^2(\mathcal{Q}_{\lceil q_k N \rceil }) \otimes  \ell^2(\mathcal{Q}_{N- \lceil q_k N \rceil }) $, whose matrix elements are given by 
	\[ \langle \pmb{\sigma}| \rho_{k,\beta} | \pmb{\sigma}^\prime \rangle \coloneqq \begin{cases}  \langle \pmb{\sigma}_{k+1} \cdots \pmb{\sigma}_m | \tilde{\rho}_{k,\beta} | \pmb{\sigma}_{k+1}^\prime\cdots \pmb{\sigma}_{m}^\prime \rangle & \text{if } P_k \pmb{\sigma} = P_k \pmb{\sigma}^0 = P_k \pmb{\sigma}^\prime \\
	0 & \text{ else }.
	\end{cases}    \]
	By Gibbs' variational principle, we have 
	\[ \frac1N \ln \tr e^{-\beta(U- h-B)} \geq \frac{\beta}{N} \tr[\rho_{k,\beta}(B^{1,q_k}+h + U^{(k)}  - U)] + \frac1N \ln \tr_{|\ell^2(\mathcal{Q}_{N- \lceil q_k N \rceil }) } e^{-\beta H^{(k)}}. \]
 Since  the trial density matrix $\rho_{k,\beta}$ is diagonal with respect to $\pmb{\sigma}_1 \cdots \pmb{\sigma}_k$ and fixes the first variables to $ P_k \pmb{\sigma}^0 $, we have 
  \[ \tr[\rho_{k,\beta}B^{1,q_k}] = 0 =  \tr\big[\rho_{k,\beta} (U^{(k)}  - U)\big] .\] 
  Thus, it remains to show the almost sure identities
  \begin{equation}\label{eq:low1}
  \lim_{N \to \infty} \frac1N \tr[\rho_{k,\beta} h] = \eta_{k+1},
  \end{equation}
  and 
  \begin{equation}\label{eq:low2}
  \lim_{N \to \infty} \frac1N \ln \tr_{|\ell^2(\mathcal{Q}_{N- \lceil q_k N \rceil }) } e^{-\beta H_k} = \sup_{q_k \leq z \leq 1} \left[\int_{0}^{z-q_k} \varphi^{(y,1)}(x) \, dx + (1-z) \ee[\ln 2 \cosh(\beta \mathfrak{b})] \right] .
  \end{equation}
  	
\noindent
  \textit{Step 2.1: Proof of~\eqref{eq:low1}:} \quad
  Using $ h(\pmb{\sigma}) = N \eta(q_N(\pmb{\sigma},\pmb{\sigma}^0)) $ we compute the trace in the $ z$-basis:
  \[ \begin{split} \frac{1}{N} \tr[\rho_{k,\beta}h ] &= \sum_{l=0}^{m-1} \left( \eta_{l+1} \sum_{\pmb{\sigma}: P_{l}\pmb{\sigma}^0 = P_{l} \pmb{\sigma}, P_{l+1}\pmb{\sigma}^0 \neq P_{l+1} \pmb{\sigma}} \langle \pmb{\sigma}|\rho_{k,\beta} | \pmb{\sigma} \rangle \right) +  \eta_m \langle \pmb{\sigma}^0| \rho_{k,\beta} | \pmb{\sigma}^0 \rangle \\
  &= \sum_{l=k}^{m-1} \left(  \eta_{l+1} \sum_{\pmb{\sigma}: P_{l}\pmb{\sigma}^0 = P_{l} \pmb{\sigma}, P_{l+1}\pmb{\sigma}^0 \neq P_{l+1} \pmb{\sigma}} \langle \pmb{\sigma}|\rho_{k,\beta} | \pmb{\sigma} \rangle \right) + \eta_m \langle \pmb{\sigma}^0| \rho_{k,\beta} | \pmb{\sigma}^0 \rangle,
   \end{split}  \] 
   where the second equality is due to the construction of $\rho_{k,\beta}$. Since   $\rho_{k,\beta}$ has unit trace, 
   \begin{equation}\label{eq:tr1}
   1 =  \sum_{\pmb{\sigma}:  P_{k}\pmb{\sigma}^0 = P_{k} \pmb{\sigma}} \langle \pmb{\sigma}|\rho_{k,\beta} | \pmb{\sigma} \rangle , 
   \end{equation} 
   and is non-negative, we may estimate both from above and below:
   \[ \left| \frac1N \tr[\rho_{k,\beta} h ] - \eta_{k+1} \right| \leq \|\eta\|_{\infty} \sum_{\pmb{\sigma}:  P_{k+1}\pmb{\sigma}^0 = P_{k+1} \pmb{\sigma}} \langle \pmb{\sigma}|\rho_{k,\beta} | \pmb{\sigma} \rangle.  \]
   We further deduce from the spin-flip  covariance of $ H^{(k)} $ that     for any $\pmb{\sigma},\pmb{\sigma}'$ with $P_k \pmb{\sigma} = P_k \pmb{\sigma}' = P_k \pmb{\sigma}^0$:
   \[ \ee[\langle \pmb{\sigma}|\rho_{k,\beta} | \pmb{\sigma} \rangle] = \ee[ \langle \pmb{\sigma}'|\rho_{k,\beta} | \pmb{\sigma}' \rangle] . \]
Consequently, using the normalisation~\eqref{eq:tr1} and counting the number of configurations, we have
    \[ \ee\left[ \sum_{\pmb{\sigma}:  P_{k+1}\pmb{\sigma}^0 = P_{k+1} \pmb{\sigma}} \langle \pmb{\sigma}|\rho_{k,\beta} | \pmb{\sigma} \rangle \right] = \frac{2^{N(1-q_{k+1})}}{ 2^{N(1-q_{k})} } = 2^{-N(q_{k+1} -q_k)}.  \]
    By a Borel-Cantelli argument, we thus arrive at the almost sure convergence 
    \[ \lim_{N \to \infty} \left| \frac1N \tr[\rho_{k,\beta}h ] - \eta_{k+1} \right| = 0.   \]
    	
\noindent
    \textit{Step 2.2: Proof of~\eqref{eq:low2}: } \quad
    We may rewrite the restricted process (in distributional sense)
    \[U((P_k \pmb{\sigma}^0)\pmb{\sigma}_{k+1}\cdots \pmb{\sigma}_m)
      = U^\prime(\pmb{\sigma}_{k+1}\cdots \pmb{\sigma}_m) + \sqrt{N  A(q_k)} \ Y, \]
     where $U^\prime(\pmb{\sigma}_{k+1}\cdots \pmb{\sigma}_m)$ is a GREM process on $ \mathcal{Q}_{N- \lceil q_k N \rceil } $ with (non-normalized) distribution function $A^{(q_k,1)}$ and $Y$ is a standard Gaussian variable which is independent of $U^\prime$. This  distributional equality relies on the fact that centered Gaussian processes are uniquely determined by their covariance function. Of course, $Y$ does not contribute to the limit of the pressure, 
     \[ \lim_{N \to \infty} \frac1N \ln \tr_{|\ell^2(\mathcal{Q}_{N- \lceil q_k N \rceil }) } e^{-\beta H^{(k)}} = \lim_{N \to \infty} \frac1N \ln \tr_{|\ell^2(\mathcal{Q}_{N- \lceil q_k N \rceil }) } e^{-\beta(U^\prime - B^{(2,q_k)})},    \]
     provided that the limit on the right side exists. This is warranted by Theorem 1.4 in \cite{MW20c}, which almost surely yields 
     \begin{equation}\label{eq:Step2.2}
     \lim_{N \to \infty} \frac1N \ln\tr_{|\ell^2(\mathcal{Q}_{N- \lceil q_k N \rceil }) } e^{-\beta(U^\prime - B^{(2,q_k)})} =  \sup_{q_k \leq z \leq 1}\left[  \int_{0}^{z-q_k} \varphi^{(y,1)}(x) \, dx + (1-z) \ee[\ln 2 \cosh(\beta \mathfrak{b})] \right] . 
     \end{equation}
     
\bigskip
     \noindent
     \textbf{Step 3: Upper bound}  \\[1ex]
     The method is similar in spirit to the application of the peeling principle presented in \cite{MW20c}. However, we need to cut the transversal field $B$ in a different manner which suits the hierarchical field $h$.\\
    	
\noindent   \textit{Step 3.1: Truncating the transversal field $B$} \\[1ex]
     We define the partial fields 
     \[ B_k \coloneqq B^{(1,q_k-1)} - B^{(1,q_{k})} = \mkern-5mu\sum_{i=\lceil q_{k-1} N \rceil+1}^{\lceil q_{k} N \rceil} \mkern-10mu b_i \pmb{s}_i     \]
     where we set $B^{(1,q_0)} = 0$. Hence $B_k$ only acts on $\pmb{\sigma}_k$. We also define the restriction $B^\prime_k$ of $B_k$ to the complement of $(p_k \pmb{\sigma}^0)$:
     \[ B_k - B^\prime_k := \mathbbm{1} \otimes \sum_{\lceil q_{k-1} N \rceil < j \leq \lceil q_{k} N \rceil} b_j  \left( |p_k(F_j\pmb{\sigma}^0) \rangle \langle p_k\pmb{\sigma}^0 | + |p_k\pmb{\sigma}^0 \rangle \langle p_k(F_j\pmb{\sigma}^0)| \right) \otimes \mathbbm{1}.    \]
     Here, the first identity acts on $\pmb{\sigma}_{1}\cdots \pmb{\sigma}_{k-1}$,the last identity on $\pmb{\sigma}_{k+1}\cdots \pmb{\sigma}_m$ and $F_j$ denotes the $j$th flip operator (see~\eqref{def:B}).  We denote by $B^\prime$ the total truncated  transversal field,
     \[B^\prime = \sum_{k=1}^{m} B_k^\prime  \]
   By  the triangle inequality and a Frobenius norm estimate we have
     \[ \|B - B^\prime\| \leq \sum_{k=1}^{m} \|B_k - B_k^\prime\|  \leq m \sqrt{2  \sum_{i=1}^{N} |b_i|^2} = o(N).  \]
     Note that the $L^1$-property of the random variable $\mathfrak{b}$ and Lemma A.2 in \cite{MW20c} ensure that the right side is indeed of order  $o(N)$.\\[1ex]
     
     \noindent
      \textit{Step 3.2: Finishing the proof:}\quad
     Using a trivial norm bound, we estimate
      \[ \begin{split} &  e^{-\beta \|B - B^\prime\| }\ \tr e^{-\beta(U- h - B)} \leq 
      \tr e^{-\beta(U- h - B^\prime)} \\
      & = \sum_{k=0}^{m-1} \left( e^{-\beta N \eta_{k+1}} \sum_{\pmb{\sigma}:  P_{k}\pmb{\sigma}^0 = P_{k} \pmb{\sigma}, P_{k+1}\pmb{\sigma}^0 \neq P_{k+1} \pmb{\sigma} } \langle \pmb{\sigma} | e^{-\beta(U - B^\prime)} | \pmb{\sigma} \rangle \right) 
        + e^{-\beta N \eta_m} \langle \pmb{\sigma}^0 | e^{-\beta(U- B^\prime)} |\pmb{\sigma}^0 \rangle  \\
        & = \sum_{k=0}^{m-1} \left( e^{-\beta N \eta_{k+1}} \mkern-35mu \sum_{\pmb{\sigma}:  P_{k}\pmb{\sigma}^0 = P_{k} \pmb{\sigma}, P_{k+1}\pmb{\sigma}^0 \neq P_{k+1} \pmb{\sigma} } \mkern-30mu \langle \pmb{\sigma}_{k+1} \dots \pmb{\sigma_m} | e^{-\beta(U^{(k)} - B^{\prime,2,q_{k}})} | \pmb{\sigma}_{k+1} \dots \pmb{\sigma_m} \rangle \right) + e^{-\beta N \eta_m} e^{-\beta U(\pmb{\sigma}^0)}. \\
        	\end{split}     \]
        The first identity follows by an inclusion-exclusion type of summation over all spin configurations $\pmb{\sigma} \in\mathcal{Q}_N $ together with the fact that the hierarchical field $h$ commutes with $B^\prime$ (and clearly with $U$) and is constant on the respective spin configurations in the sum. The third line is a consequence of the fact that on the subspace generated by the elements $P_{k}\pmb{\sigma}^0 = P_{k} \pmb{\sigma}$, the magnetic field $ B^\prime $ operates only on the remaining spins $\pmb{\sigma}_{k+1} \cdots \pmb{\sigma}_m$ and evaluates the potential at $ U^{(k)} $,         
         see~\eqref{def:HUk}.  
         We now recall from Lemma~1 in  \cite{MW20c}  that the diagonal matrix elements $  \langle \pmb{\sigma} | e^{-\beta(U - B)} | \pmb{\sigma} \rangle $ only depend on the square of the variables $ b_i $, so that in the estimation of the trace we may aways assume without loss of generality that $ b_i \geq 0 $ and hence $ B $ as well as $ B^\prime $ have positive matrix elements in the spin configuration basis, which for $ b_i \geq 0 $ dominate each other and in particular
         \[ 
         0 \leq \langle \pmb{\sigma}_{k+1} \dots \pmb{\sigma_m} | e^{-\beta(U^{(k)} - B^{\prime(2,q_{k})})} | \pmb{\sigma}_{k+1} \dots \pmb{\sigma_m} \rangle  \leq \langle \pmb{\sigma}_{k+1} \dots \pmb{\sigma_m} | e^{-\beta(U^{(k)} - B^{(2,q_{k})})} | \pmb{\sigma}_{k+1} \dots \pmb{\sigma_m} \rangle .
         \]
This allows us to expand the summation over all matrix elements with $P_k \pmb{\sigma}^0 = P_k \pmb{\sigma}$, which leads to the upper bound
\[  e^{-\beta \|B - B^\prime\| }\ \tr e^{-\beta(U- h - B)} \leq   \sum_{k=0}^{m-1} e^{-\beta N \eta_{k+1}} \tr_{|\ell^2(\mathcal{Q}_{N- \lceil q_k N \rceil }) } e^{-\beta (U^{(k)}- B^{(2,q_{k})})} + e^{-\beta N \eta_m} e^{-\beta U(\pmb{\sigma}^0)}. 
\]
 Together with \eqref{eq:Step2.2} this finishes the proof of Theorem~\ref{thm:hierar}.
\end{proof}

\section{Proof of Theorem~\ref{thm:qcremh}}\label{sec:pfqcremh}
Based on the already established results and methods in \cite{AK14,AP19,MW20a,MW20c}, the proof of Theorem~\ref{thm:qcremh} is straightforward but  quite lengthy. Before we move on to the details, we outline our proof strategy which consists of three main steps:
\begin{itemize}
	\item[1.] First, we need to generalize the results in \cite{AK14,AP19} on the REM  and two-level GREM with a random magnetic field  to the general $n$-level GREM (see Theorem~\ref{prop:gremh} below). Following~\cite{AK14,AP19}  closely, the argument is based on a large-deviation principle for the entropy which transforms the computation of the limit to a linear optimisation problem with non-linear constraints. 
	\item[2.] Secondly, we extend the limit theorem for the classical GREM to the QGREM with a random longitudinal field (see Proposition~\ref{prop:qgremh} below). Using the peeling principle from \cite{MW20c}, the proof is quite easy. The only subtle point is to ensure that the structure of the concave hull in the variational principle is preserved. Here we use an argument which is very similar to the proof of \cite[Lemma 3.1]{MW20c}. 
	\item[3.] Finally, we use an interpolation and continuity argument to the lift the  $n$-level  QGREM result to the more general QGREM setting. We refer to the interpolation and concentration estimates in  \cite{MW20c} which are applicable
	here.
\end{itemize}

\subsection{The GREM with a random magnetic field}

The main aim of this subsection to prove the following Theorem~\ref{prop:gremh},   which extends the discussion of the two-level GREM in \cite{AP19} to the general $n$-level GREM. To this end, we will need to introduce some notation. Let $0 = x_0 < x_1 < x_2 < \cdots < x_n = 1$ be some points  $a_1,\ldots,a_n$ some nonnegative weights (we do not assume here that these weights add up to one). As in the proof of Theorem~\ref{thm:hierar}, we decompose the spin vector into blocks $\pmb{\sigma} = \pmb{\sigma}_1 \cdots \pmb{\sigma}_n$ according to the partition formed by the points $(x_k)$. 
The GREM process can be written as
\begin{equation}\label{eq:ugrem}
U(\pmb{\sigma}) = \sqrt{a_1 N} X_{\pmb{\sigma}_1} + \sqrt{a_2 N }X_{\pmb{\sigma}_1\pmb{\sigma}_2} + \cdots + \sqrt{a_n N }X_{\pmb{\sigma}_1\pmb{\sigma}_2\cdots \pmb{\sigma}_n}, 
\end{equation} 
where the appearing random variables $X_{\pmb{\sigma}_1}, X_{\pmb{\sigma}_1\pmb{\sigma}_2}, \ldots, X_{\pmb{\sigma}_1\pmb{\sigma}_2\cdots \pmb{\sigma}_n}$ are independent standard Gaussian variables.
Note that $U(\pmb{\sigma}) $ coincides with the GREM process with (non-normalized) distribution function $A$,
\[ A(x) = \sum_{k =1}^{n} a_k \mathbbm{1}_{[x_k,1]}(x). \]

The limit depends on the concave hull $\bar{A}$ of $A$ consisting of linear segments which are supported on a subset of points $0 = y_0 < y_1 < \cdots <y_m = 1$ where $A$ and $\bar{A}$ agree. It is convenient to further introduce the following quantities: the increments of the concave hull $\bar{a}_l \coloneqq A(y_l) - A(y_{l-1})$, the interval lengths $L_l \coloneqq y_l - y_{l-1}$ and the slopes $\gamma_l \coloneqq \bar{a}_l/L_l$. 

As our main result in this section, we show that the limit of the classical pressure
 $\Phi_N(\beta,\mathfrak{h}) = \Phi_N(\beta,\mathfrak{h},0)$ can then be expressed in terms of the partial pressures 
 \begin{equation}\label{eq:part}
 \varphi^{(l)}(\beta,\mathfrak{h})\coloneqq \begin{cases} 
  \frac{\bar{a}_l}{2} \beta^2 + L_l \ee[\ln 2 \cosh \beta \mathfrak{h}] & if \quad \beta \leq \beta_c^{(l)} \\
 	\beta(\bar{a}_l\beta_c^{(l)} + L_l \ee[\mathfrak{h} \tanh \beta_c^{(l)} \mathfrak{h}]) & if \quad \beta > \beta_c^{(l)} 
 \end{cases}
\end{equation}
where the critical temperatures $\beta_c^{(l)} = \beta_c^{(l)}(\mathfrak{h})$ are each the unique positive solution of the self-consistency equation 
\begin{equation}\label{eq:partcrit}
\frac{\gamma_l}{2}\beta_c^{(l)^2} =  \ln 2 + \ee[\ln \cosh \beta_c^{(l)} \mathfrak{h}] - \beta_c^{(l)}\ee[\mathfrak{h} \tanh \beta_c^{(l)} \mathfrak{h}]. 
\end{equation}
The following generalises results in~\cite{AK14,AP19}, which in turn built on~\cite{BK04a,BK08}.
\begin{theorem}\label{prop:gremh}
	Let $U(\pmb{\sigma}) $ be a GREM process as in \eqref{eq:ugrem},  $\beta \geq 0$ and $\mathfrak{h}$ an absolutely integrable random variable. Then, almost surely
	\begin{equation}\label{eq:gremh}
	\lim_{N \to \infty} \Phi_N(\beta,\mathfrak{h}) = \sum_{l=1}^{m} \varphi^{(l)}(\beta,\mathfrak{h}).
	\end{equation}
\end{theorem}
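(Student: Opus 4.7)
The plan is to generalize the large-deviation / REM-with-field argument of \cite{AK14,AP19} from the REM and two-level GREM to the general $n$-level case. I would proceed in three phases.

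\textbf{Phase 1 (Concave hull reduction).} First, reduce to $A = \bar A$ by the peeling / lumping argument standard in the field-free GREM \cite{BK04a,BK08}: consecutive levels whose cumulative variance sits strictly below the concave hull may be merged into a single effective level with Gaussian variance $\bar a_l N$ on a block of length $L_l N$. Since the longitudinal field $h(\pmb{\sigma}) = \sum_j h_j \sigma_j$ is additive across spins and independent of the Gaussian landscape, it is compatible with the lumping, and the effective block field $H_l(\pmb{\sigma}_l) = \sum_{j \in B_l} h_j \sigma_j$ is inherited without modification.

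\textbf{Phase 2 (Iterated REM-with-field).} With $A = \bar A$, I would peel off the innermost block:
\[
Z_N \;=\; \sum_{\pmb{\sigma}_1 \cdots \pmb{\sigma}_{m-1}} \exp\!\Big(\beta \sum_{l<m}\! \sqrt{\bar a_l N}\, X_{\pmb{\sigma}_1 \cdots \pmb{\sigma}_l} + \beta \sum_{l<m} H_l(\pmb{\sigma}_l) \Big) \cdot Z^{(m)}_N(\pmb{\sigma}_1 \cdots \pmb{\sigma}_{m-1}),
\]
where $Z^{(m)}_N(\pmb{\sigma}_1 \cdots \pmb{\sigma}_{m-1}) := \sum_{\pmb{\sigma}_m} \exp(\beta \sqrt{\bar a_m N}\, X_{\pmb{\sigma}_1 \cdots \pmb{\sigma}_m} + \beta H_m(\pmb{\sigma}_m))$. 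For every fixed history, this is a REM-with-field partition function on $L_m N$ spins with Gaussian variance $\gamma_m L_m N$ per energy, so the single-REM result of \cite{AK14} yields $\frac{1}{N}\ln Z^{(m)}_N \to \varphi^{(m)}(\beta, \mathfrak{h})$ with $\beta_c^{(m)}$ characterised by \eqref{eq:partcrit}. Since the $Z^{(m)}_N$ across distinct histories are independent (through disjoint Gaussian labels), a union bound over the $2^{N(1-L_m)}$ histories promotes this to uniform convergence. Substituting back reduces the problem to an $(m-1)$-level GREM-with-field with a uniform additive constant $\varphi^{(m)}(\beta, \mathfrak{h})$, and iteration over $l = m, m-1, \ldots, 1$ yields $\sum_{l=1}^m \varphi^{(l)}$. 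Alternatively, following \cite{AK14,AP19} more literally, one establishes a joint LDP for the block-wise empirical distribution of spin--field pairs, reducing the pressure to a linear optimisation with non-linear constraints enforcing the GREM energy ordering; solving this variational problem produces the same formula.

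\textbf{Main obstacle.} The principal technical difficulty is the uniform convergence in Phase 2: the REM-with-field limit must hold simultaneously across $2^{N(1-L_m)}$ histories, which requires a concentration estimate for $\frac{1}{N}\ln Z^{(m)}_N$ with failure probability beating $2^{-N(1-L_m)}$. Gaussian concentration of measure controls the $X$-fluctuations with the right exponential rate, but because $\mathfrak{h}$ is only assumed absolutely integrable, its tails must first be truncated (as in \cite{AK14}) before concentration is applicable. A secondary point is verifying that the self-consistency equations \eqref{eq:partcrit}, one per linear segment of $\bar A$, combine correctly into the Parisi-type sum across all regimes of $\beta$: different segments may cross their critical temperatures at different $\beta$, and the final formula must be checked to give the correct frozen/unfrozen structure on each block simultaneously.
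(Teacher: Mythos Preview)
Your primary Phase~2 route --- replace each inner partition function $Z^{(m)}_N$ by $e^{N\varphi^{(m)}}$ uniformly over the $2^{N(1-L_m)}$ histories via concentration plus a union bound --- does not close, and the obstacle is exactly the one you flag but misjudge. Gaussian concentration (conditionally on $(h_j)$) gives
\[
\pp\!\left( \Big| \tfrac{1}{N}\ln Z^{(m)}_N - \ee\big[\tfrac{1}{N}\ln Z^{(m)}_N\big] \Big| > t \right) \le 2\exp\!\Big(-\frac{Nt^{2}}{2\beta^{2}\bar a_m}\Big),
\]
so the union bound over $2^{N(1-L_m)}$ histories is effective only for $t > \beta\sqrt{2\bar a_m(1-L_m)\ln 2}$, a \emph{fixed positive} threshold. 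The uniform substitution therefore leaves an $O(1)$ error in the pressure that no truncation of $\mathfrak{h}$ removes: the issue is the Gaussian fluctuations themselves, not the field tails. Equivalently, over exponentially many independent inner REMs the maximum of $\tfrac{1}{N}\ln Z^{(m)}_N$ genuinely exceeds its typical value by a positive constant, and those atypical histories couple nontrivially with the outer sum. Your Phase~1 is likewise not a free-standing reduction: in the field-free GREM the concave hull is an \emph{output} of the analysis, emerging from the constrained variational problem, not an input one can impose beforehand by lumping; the same is true here.

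The paper proceeds along what you list only as the ``alternative''. It proves a large-deviation statement for the joint occupation numbers
\[
N(\pmb{E},\pmb{y})=\big|\{\pmb{\sigma}:\ \sqrt{a_k}X_{\pmb{\sigma}_1\cdots\pmb{\sigma}_k}\le -\sqrt{N}E_k\ \text{and}\ h_k(\pmb{\sigma}_k)\le -Ny_k\ \text{for all }k\}\big|
\]
(Lemma~\ref{lem:entropy}, via G\"artner--Ellis for the field blocks and a conditional second-moment argument for the Gaussians, Proposition~\ref{claim}); turns this into the variational expression $\sup_{(\pmb{E},\pmb{y})\in\mathcal{C}}\{\beta\sum_k(E_k+y_k)+S(\pmb{E},\pmb{y})\}$ with the tree constraints $\mathcal{C}$ (Lemma~\ref{lem:var}); and then solves that constrained optimisation by induction on the number $m$ of linear segments of $\bar A$, which is precisely where the concave hull and the self-consistency equations \eqref{eq:partcrit} appear. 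If you develop your alternative in earnest, this is the argument you will reconstruct.
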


We stress that a random field does only change the partial pressures $ \varphi^{(l)}$ but not the number of terms in the right side. In particular, the limit remains to be a function of the concave hull $\bar{A}$ and not $A$ itself.\\

Our proof of Theorem~\ref{prop:gremh} follows the large-deviation approach in \cite{AK14,AP19}. We first need to understand the energy statistics of the random field. To this end, it is convenient to decompose the field $h(\pmb{\sigma})$ into blocks 
\[h_k(\pmb{\sigma}_k) \coloneqq \sum_{\lceil x_{k-1} N \rceil +1 \leq j \leq \lceil x_{k} N \rceil} h_j \sigma_j.     \]
We first study the occupation numbers 
\[ N(y_k) \coloneqq \left| \{ \pmb{\sigma}_k | \quad h_k(\pmb{\sigma}_k) \leq - N y_k    \} \right| . \]
With respect to the uniform distribution on spin configurations $ \pmb{\sigma}_k $, the random variables $  h_k(\pmb{\sigma}_k) /N_k $ with $ N_k := (x_k-x_{k-1}) N $ have a finite logarithmic-moment generating function given by
\[ 
	\Lambda_N(t) := \frac1{N_k} \ln \left(  \frac{1}{2^{N_k}} \sum_{\pmb{\sigma}_k} e^{th_k(\pmb{\sigma}_k) } \right)    =  N_k^{-1} \mkern-15mu\sum_{\lceil x_{k-1} N \rceil +1 \leq j \leq \lceil x_{k} N \rceil}  \mkern-10mu \ln \cosh(th_j)  
	  =:  \ee[\ln\cosh(t \mathfrak{h})  ]  + S_N(t) 
 \]
	where $ S_N(t) $ is a random variable. For any fixed $ t \in \mathbb{R} $ by the strong law of large numbers the latter converges to zero as $ N\to \infty $. In fact, 
	we can find a set of full probability (with respect to the distribution of the iid variables $ (h_i) $) such that the almost-sure convergence
	\[
	\lim_{N\to \infty } \Lambda_N(t) = \ee[\ln\cosh(t \mathfrak{h})  ] 
	\]
	 holds true for all $t\in \mathbb{R} $ simultaneously. This follows from an $ 3 \varepsilon $-argument by considering a countable dense set first and extending this assertion by noticing that both sides are continuously differentiable in $ t $ (see the proof of Lemma~5 in \cite{AK14}).
	 The G\"artner Ellis theorem (cf.~\cite{DemZei10}) then ensures that 
\begin{equation}\label{eq:rate}
I(z) \coloneqq \sup_{t \in \rr} \{zt - \ee[\ln \cosh t \mathfrak{h}]   \}  
\end{equation}
is a rate function for $ N_k^{-1} h_k(\pmb{\sigma}_k) $ for any $ k $. As a Legendre transform $ I : \mathbb{R} \to \mathbb{R} \cup \{ \infty\} $ is lower semicontinuous. It is straightforward to check that 
$ I $ is symmetric, $ I(-z) = I(z) $, equal to $+\infty$ for  $ |z|> \ee[|\mathfrak{h}|] $, continuously differentiable on $  (-\ee[|\mathfrak{h}|], \ee[|\mathfrak{h}|]) $, where it is bounded by $ \ln 2 $, and continuous and monotone on $  [0, \ee[|\mathfrak{h}|]) $.

The G\"artner Ellis theorem also allows to determine the asymptotic behaviour of occupation numbers $ N(y_k) $, which we can rewrite as $ 2^{N_k} $ times the probability that 
\[ h_k/N_k\leq -  y_k/(x_k-x_{k-1}) =: \xi_k(y_k) =:\xi_k . \]
More precisely, we almost surely have
\begin{align}\label{eq:entf1}
 \ln 2 - \inf_{z< -\xi_k} I(z) & \leq \liminf_{N \to \infty} \frac1{N_k} \ln  N(y_k)  \\
 & \leq \limsup_{N \to \infty} \frac1{N_k} \ln  N(y_k) \leq  \ln 2 - \inf_{z\leq -\xi_k} I(z)   = \ln 2 - I(  \xi_k)  . \notag 
 \end{align}
 By the aforementioned continuity of $ I $, we thus obtain for $ \xi_k \in (-\ee[|\mathfrak{h}|], \ee[|\mathfrak{h}|]) $  the almost-sure convergence
 \begin{equation}\label{eq:entf2}
  \lim_{N \to \infty} \frac1{N_k} \ln  N(y_k)  = \ln2 - I(\xi_k) ,
 \end{equation}
 which describes the energy statistics of the magnetic field.
As a next step, we analyse the energy statistics of the total Hamiltonian.  We start by extending our definition of occupation numbers and introduce:
\begin{equation}\label{eq:num} \begin{split}
 N(\pmb{E},\pmb{y}) &:= N(E_1,\ldots,E_n,y_1,\ldots,y_n) \\& \coloneqq |\{ \pmb{\sigma} \in \mathcal{Q}_N | \quad \sqrt{a_k} X_{\pmb{\sigma}_1 \cdots \pmb{\sigma}_k} \leq - \sqrt{N} E_k \text{ and } h_k(\pmb{\sigma}_k) \leq - N y_k \text{ for all } k=1,\ldots,n   \}|  \end{split}  \end{equation} 
Our next goal is to obtain the asmptotics for $ N(\pmb{E},\pmb{y})$. To this end, we introduce the entropy 
 \begin{equation}\label{eq:entropy}
 S(\pmb{E},\pmb{y}) \coloneqq \ln 2 - \sum_{j=1}^{n} \left(\frac{E_j^2}{2a_j}  +(x_j - x_{j-1})I(\xi_j(y_j)) \right)
 \end{equation}
as well as the constraints 
 \begin{equation}\label{eq:cont}
 \mathcal{C} \coloneqq \Big\{(\pmb{E},\pmb{y}) \in \rr_{\geq 0}^n \times \rr_{\geq 0}^n\ \Big| \ \sum_{j=1}^{k} \frac{E_j^2}{2a_j} +(x_j - x_{j-1})I(\xi_j(y_j)) < x_k \ln 2 \; \text{for all } k=1,\ldots,n    \Big\} . 
 \end{equation}
Note that $(\pmb{E},\pmb{y}) \in \mathcal{C}$ guarantees that $ I(\xi_k) < \infty $ for all $ k $.
By continuity of the involved functions on the domain, where they are finite, we conclude that $  \mathcal{C}  $ is an open set and $ \xi_j(y_j) \in (-\ee[|\mathfrak{h}|], \ee[|\mathfrak{h}|]) $ for any $ j $ in case $(\pmb{E},\pmb{y}) \in \mathcal{C}$. 

The following lemma on the asymptotics of $ N(\pmb{E},\pmb{y})$  is a natural generalization of Theorem 1.2 in~\cite{AP19}. We remark that $\frac1N \ln  N(\pmb{E},\pmb{y})$ is shown to converge almost surely, but not in expectation. As the event $\{N(\pmb{E},\pmb{y}) = 0\}$ has a small, but nonvanishing, probability, we in fact have  $\ee[\ln  N(\pmb{E},\pmb{y})] = - \infty$. 

 \begin{lemma}\label{lem:entropy}
 Let $X$ be an $n$-level GREM vector as in \eqref{eq:ugrem} and $(h_i)$ independent copies of an absolutely integrable random variable $\mathfrak{h}$. Then, if  $(\pmb{E},\pmb{y}) \in \mathcal{C}$, we almost surely have 
 \begin{equation}\label{eq:entc1}
 \lim_{N \to \infty} \frac1N \ln  N(\pmb{E},\pmb{y}) =  S(\pmb{E},\pmb{y}).
 \end{equation}
 On the other hand, if  $(\pmb{E},\pmb{y}) \notin \bar{\mathcal{C}}$, the topological closure of $ \mathcal{C}$, almost surely and for all, but finitely many $ N $:
 \begin{equation}\label{eq:entc2}
 N(\pmb{E},\pmb{y})  = 0 . 
 \end{equation}
 \end{lemma}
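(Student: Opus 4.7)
The plan is to combine Gaussian tail asymptotics for the independent GREM variables $X_{\pmb{\sigma}_1\cdots\pmb{\sigma}_k}$ with the large-deviation asymptotics for the magnetic-field block counts $N(y_k)$ already recorded in~\eqref{eq:entf2}. Since the $X$-variables are mutually independent and independent of the $h$-variables, and since the blocks of $h$'s used at different levels are disjoint, the expectation of $N(\pmb{E},\pmb{y})$ factorises cleanly, and Mills-ratio together with G\"artner--Ellis applied per block yield
\[ \ee[N(\pmb{E},\pmb{y})] = \exp\bigl( N S(\pmb{E},\pmb{y}) + o(N) \bigr). \]
The same estimate applies to each partial count $\tilde N_k := |\{(\pmb{\sigma}_1,\ldots,\pmb{\sigma}_k):\text{constraints up to level }k\}|$, which replaces the full entropy by the partial one $S_k := x_k \ln 2 - \sum_{j\leq k}[E_j^2/(2a_j) + (x_j-x_{j-1})I(\xi_j(y_j))]$. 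Markov's inequality together with Borel--Cantelli then deliver both the almost-sure upper bound $\limsup N^{-1}\ln N(\pmb{E},\pmb{y}) \leq S(\pmb{E},\pmb{y})$ and the emptiness assertion~\eqref{eq:entc2}: if $(\pmb{E},\pmb{y})\notin\bar{\mathcal{C}}$ there is some $k$ with $S_k < 0$, whence $\ee[\tilde N_k]$ decays exponentially and $N(\pmb{E},\pmb{y})\leq \tilde N_k$ vanishes eventually.

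For the matching lower bound under $(\pmb{E},\pmb{y})\in\mathcal{C}$, I would proceed by induction on $k$ and show that almost surely $N^{-1}\ln \tilde N_k \to S_k$, which is strictly positive along the iteration by the open condition defining $\mathcal{C}$. The useful viewpoint is to condition on the field realisation $(h_j)$: the magnetic-admissible block counts $M_k := N(y_k)$ are then asymptotically deterministic by~\eqref{eq:entf2}, so only the GREM randomness remains. At the base $k=1$, $\tilde N_1$ is a sum of $M_1$ i.i.d.\ Bernoulli variables indexed by the magnetic-admissible $\pmb{\sigma}_1$ with success probability $p_1 \sim e^{-NE_1^2/(2a_1)}$; since $M_1 p_1$ grows exponentially, a Paley--Zygmund/second-moment estimate forces $\tilde N_1 = M_1 p_1 (1+o(1))$ almost surely. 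The inductive step is structurally identical: given the exponentially many surviving prefixes at level $k$, each contributes $M_{k+1}$ independent Bernoulli extensions driven by fresh Gaussians $X_{\pmb{\sigma}_1\cdots\pmb{\sigma}_{k+1}}$, so $\tilde N_{k+1}$ is a sum of $\tilde N_k\cdot M_{k+1}$ independent Bernoulli trials across disjoint subtrees, and the same second-moment bound yields the next entropy level. Setting $k=n$ completes the lower bound.

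The chief technical point is the concentration step inside the induction: the Bernoulli trials carry exponentially small success probabilities but occur in exponentially large numbers, and their second moment must be shown to be dominated by the square of the mean. The tree structure of the GREM makes this work --- Gaussians on distinct subtrees are independent, so correlations only flow through the previous-level count $\tilde N_k$, which has already been pinned down by the induction hypothesis. The strict inequality defining $\mathcal{C}$ supplies the exponential gap needed for Chebyshev to close, and in this way the $n=2$ treatment of~\cite{AP19} lifts to general $n$ without essentially new ideas.
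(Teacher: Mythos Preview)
Your approach is correct and essentially identical to the paper's: both condition on the field realisation $(h_j)$, use a first-moment/union bound together with the quenched G\"artner--Ellis asymptotics~\eqref{eq:entf1} for the upper bound and for the emptiness claim, and prove the matching lower bound by an inductive second-moment argument (the paper packages this as Proposition~\ref{claim} and splits each inductive step into an $F_k/G_k$ pair---adding the magnetic and the Gaussian constraint separately---whereas you merge them into a single $\tilde N_k\to\tilde N_{k+1}$ step).

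One point to make explicit: your displayed identity $\ee[N(\pmb{E},\pmb{y})]=\exp(NS(\pmb{E},\pmb{y})+o(N))$ must be read as the \emph{conditional} expectation $\ee_X[\,\cdot\mid(h_j)]$, which is also what the paper uses for its union bound. The unconditional expectation over the fields would replace $I$ by the annealed rate $\sup_t\{zt-\ln\ee[\cosh(t\mathfrak{h})]\}$, which by Jensen is strictly smaller than $I$ whenever $|\mathfrak{h}|$ is non-degenerate; Markov with the full expectation would then only deliver an upper bound strictly larger than $S(\pmb{E},\pmb{y})$ and fail to match the lower bound. You already condition on $(h_j)$ in the lower-bound paragraph, so this is just a matter of doing so consistently.
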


\begin{proof}
	Let us start with the case $ (\pmb{E},\pmb{y}) \notin \bar{\mathcal{C}}$. One then finds some $k \in \nn$ and $\epsilon > 0$ such that 
	\begin{equation}\label{ass:notC} \sum_{j=1}^{k} \frac{E_j^2}{2a_j} +(x_j - x_{j-1})I(y_j/(x_j - x_{j-1})) \geq x_k \ln 2 + \epsilon.   \end{equation}
	We condition on the weights $(h_i)$ and compute the probability that a reduced spin vector $\pmb{\sigma}_1 \cdots \pmb{\sigma}_k$ meets the first $k$ energy requirements
	\begin{equation}\label{eq:Gaussian}
	 \begin{split} &\pp(\sqrt{a_j } X_{\pmb{\sigma}_1 \cdots \pmb{\sigma}_j} \leq - \sqrt{N} E_j \text{ and } h_j(\pmb{\sigma}_j) \leq - N y_j \text{ for all } j=1,\ldots,k \, | (h_i) )  \\ & = \prod_{j =1}^{k}  \pp(\sqrt{a_j } X_{\pmb{\sigma}_1 \cdots \pmb{\sigma}_j} \leq - \sqrt{N} E_j) \, \pp( h_j(\pmb{\sigma}_j) \leq - N y_j \, | (h_i) ) \\ &  \leq \prod_{j =1}^{k}  e^{-N E_j^2/ (2a_j)} 1[ h_j(\pmb{\sigma}_j) \leq - N y_j]. \end{split}   
	 \end{equation}
	The first equality is due to the independence of the variables $ X_{\pmb{\sigma}_1 \cdots \pmb{\sigma}_j}  $ for different $ j $. The bound on the first probability follows from the standard Gaussian estimate.  This may be inserted into the following union bound
	\[ \begin{split}  \pp(N(\pmb{E},\pmb{y}) \geq 1| (h_i))  & \leq \sum_{\pmb{\sigma}_1\cdots \pmb{\sigma}_k} \pp\big(\sqrt{a_j N} X_{\pmb{\sigma}_1 \cdots \pmb{\sigma}_j} \leq - N E_j \text{ and } h_j(\pmb{\sigma}_j\big) \leq - N y_j \text{ for all } j=1,\ldots,k \, | (h_i) )  \\
	& \leq \exp\Big( -N \sum_{j=1}^{k} \frac{E_j^2}{2a_j} \Big) \, \prod_{j=1}^k N(y_k) ,
	\end{split} 
	\]
	where the last inequality is the previous estimate.
	
	We now distinguish two cases. If  
	\[  \pmb{y} \in  \mathcal{G}_k  \coloneqq \{\pmb{y} \in \rr_{\geq 0}^n| \, I( \xi_j(y_j) ) < \infty \; \mbox{for all $j=1,\dots , k $}\}, \] 
	we may further estimate the right side using~\eqref{ass:notC}  and the upper bound in~\eqref{eq:entf1} to conclude that for all, but finitely many $N $ and almost surely with respect to the variables $(h_i) $:
	\[
	  \pp(N(\pmb{E},\pmb{y}) \geq 1| (h_i)) \leq e^{-N\epsilon/2}  .
	\]
	A Borel-Cantelli argument then shows that $N(\pmb{E},\pmb{y})$ converges almost surely to zero.
	
	In case $  \pmb{y} \not\in  \mathcal{G}_k $ there exist an integer  $j \in \{1, \dots  k\}  $ such that $I(  \xi_j(y_j)) = \infty $.   Consequently, \eqref{eq:entf1} implies the almost-sure convergence $  \limsup_{N \to \infty} \frac1{N_j} \ln  N(y_j)  = -\infty $. Since $ N(y_j) \in \mathbb{N}_0 $, this implies that almost surely $ N(y_j) = 0 $ for all, but finitely many $ N $. In turn. we conclude $ \pp(N(\pmb{E},\pmb{y}) \geq 1| (h_i)) = 0 $ for all, but finitely many $ N $ and hence the claim~\eqref{eq:entc2} in this case.\\

It thus remains to show \eqref{eq:entc1} for $ (\pmb{E},\pmb{y}) \in \mathcal{C}$ . This proof will be based on
Proposition~\ref{claim} below. For its application, we introduce the following sequences of numbers 
\[ \begin{split}
F_k(N) & \coloneqq \frac1N \ln |\{ \pmb{\sigma}_1 \cdots \pmb{\sigma}_k  | \; \sqrt{a_i } X_{\pmb{\sigma}_1 \cdots \pmb{\sigma}_i} \leq - \sqrt{N} E_i  \text{ and } \\
& \mkern200mu   h_j(\pmb{\sigma}_j) \leq - N y_j \text{ for all } i=1,\ldots,k-1; \; j=1,\ldots,k   \}| \\
G_k(N) & \coloneqq \frac1N \ln |\{ \pmb{\sigma}_1 \cdots \pmb{\sigma}_k  | \; \sqrt{a_i } X_{\pmb{\sigma}_1 \cdots \pmb{\sigma}_i} \leq - \sqrt{N} E_i \text{ and } \\
& \mkern200mu  h_j(\pmb{\sigma}_j) \leq - N y_j \text{ for all } i=1,\ldots,k; \; j=1,\ldots,k   \}| , \quad G_0 := 0 .
\end{split}     \]
The definition of these sets are motivated by inclusion-exclusion. If we suppose that $G_k(N) $ converges as $ N \to \infty $, the almost-sure convergence~\eqref{eq:entf2}, for which we recall that $ (\pmb{E},\pmb{y}) \in \mathcal{C}$ implies $\max_{j} |\xi_j| <  \ee[|\mathfrak{h}|] $, yields
\[ \lim_{N \to \infty} F_{k+1}(N) = (x_{k+1}-x_{k}) \ln 2 - (x_{k+1} - x_{k})I(\xi_{k+1}) + \lim_{N \to \infty} G_{k}(N)  . \]
Moreover, Proposition~\ref{claim} below further implies 
\[ \lim_{N \to \infty} G_{k+1}(N) = -(2a_{k+1})^{-1} E_{k+1}^2 + \lim_{N \to \infty} F_{k+1}(N),   \]
provided that the right side is positive. By definition of the constraint, this is always the case if $(\pmb{E},\pmb{y}) \in \mathcal{C}$ such that
\[ \lim_{N \to \infty}  \frac1N \ln  N(\pmb{E},\pmb{y}) = \lim_{N \to \infty} G_n(N) =  \ln 2 - \sum_{j=1}^{n} \left(\frac{E_j^2}{ 2a_j} +(x_j - x_{j-1})I\left(\xi_j(y_j)\right) \right) =  S(\pmb{E},\pmb{y}) \]
almost surely.
\end{proof}

	The second part of the proof of Lemma~\ref{lem:entropy} relied on the following claim, whose proof follows that of Proposition 6 in \cite{AK14}. 
	\begin{proposition}\label{claim}
		Let $(D_N)_{N \in \nn}$ be a family of finite sets, $(X_s)_{s \in D_N}$ independent standard Gaussian variables and $(Y_s)_{s\in D_N} $ a random vector, which is independent of $X$ and whose entries only take the values $0$ and $1$. Further, suppose that almost surely
		\[ \lim_{N \to \infty} \frac1N \ln |\{s \in D_N | Y_s  = 0 \}| =  q > 0.  \] 
		Then the number of large deviations 
		\[ N(E) \coloneqq |\{s \in D_N | Y_s  = 0 \text{ and } \sqrt{a} X_s \leq -E \sqrt{N} \}|,   \]
		with $a > 0$ almost sure obeys
		\[ \lim_{N \to \infty}  \frac{1}{N} \ln N(E) = q - (2a)^{-1} E^2 \] 
		provided that $q >(2a)^{-1} E^2$. 
	\end{proposition}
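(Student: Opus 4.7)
My plan is to condition on the random vector $(Y_s)$ and reduce the claim to a large-deviation statement for a Binomial variable. Setting $M_N := |\{s \in D_N : Y_s = 0\}|$ and $p_N := \pp(\sqrt{a}\, X \leq -E\sqrt{N})$ for a standard Gaussian $X$, the classical tail asymptotics give $\tfrac{1}{N} \ln p_N \to -E^2/(2a)$, while by hypothesis $\tfrac{1}{N} \ln M_N \to q$ almost surely. By the independence of $X$ and $Y$, conditionally on $(Y_s)$ the count $N(E)$ is a sum of $M_N$ independent Bernoulli$(p_N)$ variables, so its conditional mean $M_N p_N$ grows exponentially at rate $q - E^2/(2a) > 0$.

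For the upper bound I would use a one-sided first moment argument. Fixing $\epsilon > 0$, on the full-probability event $\Omega_0$ on which $\tfrac{1}{N}\ln M_N \to q$, Markov's inequality gives for all sufficiently large $N$
\[
\pp\!\left(N(E) \geq e^{(q - E^2/(2a) + 2\epsilon) N} \,\big|\, (Y_s)\right) \leq \frac{M_N p_N}{e^{(q - E^2/(2a) + 2\epsilon) N}} \leq e^{-\epsilon N}.
\]
Integrating on $\Omega_0$ and summing in $N$, the Borel--Cantelli lemma yields $\limsup_N \tfrac{1}{N}\ln N(E) \leq q - E^2/(2a) + 2\epsilon$ almost surely; sending $\epsilon \downarrow 0$ along a countable sequence gives the upper bound.

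For the matching lower bound the natural tool is the conditional second moment method. The Binomial variance satisfies $M_N p_N (1-p_N) \leq M_N p_N$, and Chebyshev's inequality then gives
\[
\pp\!\left(N(E) \leq \tfrac{1}{2} M_N p_N \,\big|\, (Y_s)\right) \leq \frac{4}{M_N p_N}.
\]
On $\Omega_0$ the right-hand side is bounded by $4 e^{-(q - E^2/(2a) - \epsilon) N}$ for $N$ large, which is summable in $N$ thanks to the strict inequality $q > E^2/(2a)$. A second Borel--Cantelli argument then shows $N(E) \geq \tfrac{1}{2} M_N p_N$ for all but finitely many $N$, so that $\liminf_N \tfrac{1}{N}\ln N(E) \geq q - E^2/(2a)$ almost surely.

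I expect the main technical care to lie in coordinating the two independent sources of randomness: the assumption on $M_N$ is genuinely almost-sure, whereas the conditional Bernoulli estimates are only in-probability statements. The strict gap $q - E^2/(2a) > 0$ is precisely what makes the conditional error probabilities exponentially summable on $\Omega_0$, so that Borel--Cantelli upgrades both one-sided bounds to almost-sure convergence; without this strict inequality neither the Markov nor the Chebyshev step survives.
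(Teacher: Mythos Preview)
Your proposal is correct and follows essentially the same route as the paper: condition on $(Y_s)$, note that $N(E)$ is a sum of $M_N$ independent Bernoulli$(p_N)$ variables with $M_N p_N$ growing like $e^{(q-E^2/(2a))N}$, bound the conditional variance by the conditional mean, and use Borel--Cantelli with the exponential summability guaranteed by $q>E^2/(2a)$. The only cosmetic difference is that the paper handles both deviations at once via a single two-sided Chebyshev bound $\pp(|N(E)-\ee[N(E)\mid Z]|>\epsilon\,\ee[N(E)\mid Z]\mid Z)\leq \epsilon^{-2}\ee[N(E)\mid Z]^{-1}$, whereas you split the upper bound off with Markov; either way the argument is the same second-moment computation.
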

\begin{proof}
We apply the second moment method to $ N(E) $ and the conditional expectation conditioned on the event
$ Z \coloneqq \{s \in D_N | Y_s  = 0 \} $.
A standard calculation similar to~\eqref{eq:Gaussian} using elementary bounds on the Gaussian distribution function shows that 
\[  \ee[N(E)|Z] = Z \ \exp(-((2a)^{-1} E^2 + o(1))N). \]
By explicit computation we determine the second moment of $ N(E) $ conditioned on $ Z $:
\[ \begin{split}
& \ee[N(E)^2|Z] - \ee[N(E)|Z]^2\\
 &= \mkern-10mu \sum_{s,s^\prime: Y_s = Y_{s^{\prime}} =0} \mkern-15mu \pp\left( \sqrt{a} X_s \leq -E\sqrt{N} \; \mbox{and} \, \sqrt{a} X_{s^\prime} \leq -E\sqrt{N} \right) -\pp\left(\sqrt{a} X_s \leq -E\sqrt{N}\right) \pp\left( \sqrt{a} X_{s^\prime} \leq -E\sqrt{N}\right)   \\
&= \sum_{s: Y_s=0} \pp\left(\sqrt{a} X_s \leq -E\sqrt{N}\right) - \pp\left(\sqrt{a} X_s \leq -E\sqrt{N} \right) ^2 \leq \ee[N(E)|Z].
\end{split}  \]
Thus, the Chebyshev inequality implies for any $\epsilon > 0$:
\[ \pp(| N(E)-  \ee[N(E)|Z]| > \epsilon \ \ee[N(E)|Z] | Z) \leq \epsilon^{-2} \ \ee[N(E)|Z]^{-1} .   \]
We note that $\ee[N(E)|Z]$ is almost surely exponentially large; in fact,  by assumption $\ln Z = N( q +o(1) ) $ with  $q >(2a)^{-1} E^2$. Thus, a Borel-Cantelli argument yields almost surely
\[ \limsup_{N \to \infty} \left| \frac1N \ln \frac{N(E)}{\ee[N(E)|Z]} \right| = 0,  \]
which completes proof using the expression for $\ee[N(E)|Z]$.
\end{proof}

Based on Lemma~\ref{lem:entropy}, we may now establish a variational expression for the limiting pressure of the $n $-level GREM in a random magnetic field.
\begin{lemma}\label{lem:var}
	For any $\beta \geq 0$ and any absolutely integrable random variable $\mathfrak{h}$ the pressure $\Phi_N(\beta,\mathfrak{h})$ converges almost surely and its limit is given by 
	\begin{equation}\label{eq:var}
	\lim_{N \to \infty} \Phi_N(\beta,\mathfrak{h}) = \sup_{(\pmb{E},\pmb{y}) \in \mathcal{C}} \left( \beta(E_1+\cdots+E_n+y_1+\cdots+y_n) + S(\pmb{E},\pmb{y}) \right).
	\end{equation}
\end{lemma}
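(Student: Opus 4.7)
The plan is to follow the large-deviations strategy of \cite{AK14,AP19}: reduce the evaluation of $\Phi_N$ to a counting problem by coarse-graining spin configurations according to how much each hierarchical block $\pmb{\sigma}_k$ contributes to the GREM energy $U$ and to the magnetic field $h$, then apply Laplace's method. The counting is already done by Lemma~\ref{lem:entropy}. First I would observe that by the spin-flip symmetry $\pmb{\sigma}_k \mapsto -\pmb{\sigma}_k$, which preserves the Gaussian variables $X_{\pmb{\sigma}_1\cdots\pmb{\sigma}_k}$ in distribution and flips the sign of $h_k(\pmb{\sigma}_k)$, the occupation numbers $N(\pmb{E},\pmb{y})$ equivalently count configurations with $-\sqrt{a_k}X_{\pmb{\sigma}_1\cdots\pmb{\sigma}_k} \geq \sqrt{N}E_k$ and $h_k(\pmb{\sigma}_k) \geq Ny_k$. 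These are exactly the configurations whose Boltzmann weight in $\tr e^{-\beta(U-h)} = \sum_{\pmb{\sigma}} e^{-\beta U(\pmb{\sigma})+\beta h(\pmb{\sigma})}$ is at least $\exp\bigl(N\beta \sum_k (E_k+y_k)\bigr)$.

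For the lower bound, I would fix $(\pmb{E},\pmb{y}) \in \mathcal{C}$, restrict the partition function to the $\exp\bigl(N(S(\pmb{E},\pmb{y})+o(1))\bigr)$ configurations produced by Lemma~\ref{lem:entropy} in this equivalent form, and conclude
\[
\liminf_{N\to\infty}\Phi_N(\beta,\mathfrak{h}) \;\geq\; \beta\sum_{k=1}^{n}(E_k+y_k) + S(\pmb{E},\pmb{y}) \qquad \text{a.s.}
\]
Taking the supremum over $\mathcal{C}$ gives the lower bound in \eqref{eq:var}.

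For the upper bound, I would exploit that the Gaussian tails force $E_k \leq \sqrt{2a_k\ln 2}$ and that $|y_k|\leq (x_k-x_{k-1})\ee[|\mathfrak h|]$ on the support of the field entropy, so the relevant parameter set is compact and contains $\bar{\mathcal{C}}$. On this compact set I would introduce a grid of mesh $\varepsilon>0$, partition the configurations into boxes $B_\alpha$ where each $\sqrt{a_k}X_{\pmb{\sigma}_1\cdots\pmb{\sigma}_k}$ and $h_k(\pmb{\sigma}_k)/N$ lies in an $\varepsilon$-interval around the grid points $E_k^\alpha$, $y_k^\alpha$, and write
\[
\tr e^{-\beta(U-h)} \;\leq\; \sum_\alpha |B_\alpha|\, e^{N\beta\sum_{k}(E_k^\alpha+y_k^\alpha)+Cn\beta\varepsilon}.
\]
Boxes with $(\pmb{E}^\alpha,\pmb{y}^\alpha)\notin\bar{\mathcal{C}}$ are empty a.s.\ for all large $N$ by the second part of Lemma~\ref{lem:entropy}, while $|B_\alpha|\leq e^{N(S(\pmb{E}^\alpha,\pmb{y}^\alpha)+o(1))}$ for boxes in $\mathcal{C}$ by inclusion--exclusion against the counts $N(\pmb{E},\pmb{y})$ of its first part. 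Since the grid has only polynomially many boxes, taking $\tfrac{1}{N}\ln$, sending $N\to\infty$ and then $\varepsilon\to 0$, and using continuity of $S$ on $\bar{\mathcal{C}}$ together with density of $\mathcal{C}$ in its closure, yields the matching upper bound.

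The main obstacle is that the hierarchical nature of the constraints in $\mathcal{C}$ must be respected throughout the upper-bound argument: a naive Gaussian union bound over all of $\mathcal{Q}_N$ would undercount correlations and give the wrong rate, because the second-level Gaussians $X_{\pmb{\sigma}_1\pmb{\sigma}_2}$ are only to be sampled over ancestors $\pmb{\sigma}_1$ already realizing the first-level constraint, and so on. The recursive inclusion--exclusion between the counts $F_k(N)$ and $G_k(N)$ used inside the proof of Lemma~\ref{lem:entropy} is precisely what is needed to control the box cardinalities $|B_\alpha|$ hierarchically; reusing it here is the heart of the matter. A secondary subtlety is to check that the passage $\bar{\mathcal{C}} \rightsquigarrow \mathcal{C}$ from the upper to the lower bound loses nothing, which follows from continuity of $S$ and openness of $\mathcal{C}$.
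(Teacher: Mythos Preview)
Your proposal is correct and follows essentially the same route as the paper: a lower bound by restricting the partition function to the configurations counted in Lemma~\ref{lem:entropy}, and an upper bound by a coarse-graining on a compact box with a finite grid, using both halves of Lemma~\ref{lem:entropy} to handle grid points inside and outside~$\bar{\mathcal{C}}$. One simplification compared to what you sketch: you do not need any inclusion--exclusion or a rerun of the $F_k/G_k$ recursion to bound the box cardinalities in the upper bound, since each box $B_\alpha$ is contained in a half-space set of the form counted by $N(\pmb{E}',\pmb{y}')$ at an appropriate corner, so $|B_\alpha|\le N(\pmb{E}',\pmb{y}')$ and \eqref{eq:entc1} applies directly; this is exactly what the paper does, and all the hierarchical work is already absorbed into Lemma~\ref{lem:entropy}.
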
 

\begin{proof}
	By elementary estimates it follows that \[\exp(N \Phi_N(\beta,\mathfrak{h})) \geq 
	\exp(\beta N(E_1+\cdots+E_n+y_1+\cdots+y_n )) N(\pmb{E},\pmb{y})\]
	for any $(\pmb{E},\pmb{y})$, 
	which in view of Lemma~\ref{lem:entropy} implies almost surely
	\[\liminf_{N \to \infty} \Phi_N(\beta,\mathfrak{h}) \geq \sup_{(\pmb{E},\pmb{y}) \in \mathcal{C}} \beta(E_1+\cdots+E_n+y_1+\cdots+y_n) + S(\pmb{E},\pmb{y}).  \]
	To obtain an asymptotic upper bound we use a discretization argument. We set $\alpha \coloneqq \max_{i=1,\ldots,n} a_i$ and define the compact box 
	\[ F  \coloneqq [-(\sqrt{2\alpha \ln2}+1),\sqrt{2\alpha \ln2}+1]^n \times [-\ee[|\mathfrak{h}|]-1,\ee[|\mathfrak{h}|]+1]^n.   \]
	One easily sees that almost surely no configuration $(\pmb{E},\pmb{y})$ outside of $F$ contributes to the limit~\eqref{eq:var} of the pressure. To simplify the notation, we assume  in the following that this holds true for any $N$. Thus, it suffices to consider configurations in $F$ on which we set the grid 
	\[ F_K \coloneqq \left\{(\pmb{E},\pmb{y}) \in F \ \big| \ E_j = \frac{k_j}{K} (\sqrt{2\alpha \ln2}+1), \, y_j = \frac{l_j}{K} (\ee[\mathfrak{h}]+1), \quad \begin{array}{c} k_j,l_j = -K,-K+1,\ldots,K, \\  j=1,\ldots,n  \end{array}  \right\}    \]
	with $K \in \nn$. We pick  $\epsilon > 0$ arbitrary and choose $K=K_\epsilon$ such that $\max\{\ee[\mathfrak{h}]+1,\sqrt{2\alpha \ln2}+1\} < \epsilon K_\epsilon $. Then, the $\epsilon$-neighborhoods of the grid points in $F_K$ cover the box $F$ and therefore
	\[ e^{N \Phi_N(\beta,\mathfrak{h})} \leq \sum_{(\pmb{E},\pmb{y}) \in F_K} \mkern-5mu N(\pmb{E},\pmb{y}) \ e^{\beta N(E_1+\cdots+E_n+y_1+\cdots+y_n + 2n \epsilon)}. \]
	Let us now observe three points. First, if $E_j$ or $y_j$ is negative for some $j$ we may replace this value by $0$ without changing the number $N(\pmb{E},\pmb{y})$ on an exponential scale. This is just a consequence of symmetry and the LDP satisfied by $h_j$ and the Gaussian vectors $ X $. Secondly, without loss of generality we may assume that there are no grid points on the boundary of $\mathcal{C}$. Moreover, if 
	$(\pmb{E},\pmb{y}) \notin \bar{\mathcal{C}}$, the corresponding term gives no contribution to the limit of $\Phi_N$ by \eqref{eq:entc2}. Thirdly, the entropy factor corresponding to the summation over the grid points does not depend on $N$ and is thus irrelevant after taking the limit. Summarizing these points, we conclude almost surely
	\[ \limsup_{N \to \infty} \Phi_N(\beta,\mathfrak{h}) \leq 2\beta n \epsilon + \sup_{(\pmb{E},\pmb{y}) \in \mathcal{C}} \beta(E_1+\cdots+E_n+y_1+\cdots+y_n) + S(\pmb{E},\pmb{y}),  \]
	which completes the proof as $\epsilon > 0$ was chosen arbitrarily.
\end{proof}

It remains to solve the variational problem \eqref{eq:var} which is the last part in the proof of Theorem~\ref{prop:gremh}. Note that one may replace the $\sup$ on $\mathcal{C}$ by a maximum on $\bar{\mathcal{C}}$ as the involved expressions possess continuous extensions to $\bar{\mathcal{C}}$. 
\begin{proof}[Proof of Theorem~\ref{prop:gremh}]
We proceed via induction on $m$, the number of linear segments of the concave hull $\bar{A}$. If $m = 1$, the variational problem consists of $2n$ independent optimisation problems which can be easily solved. This leads to 
\[ E_j = \beta a_j \text{ and } y_j = (x_j-x_{j-1}) \ee[ \mathfrak{h} \tanh(\beta \mathfrak{h})] \quad j=1,\ldots,n. \]
To obtain the expression for $y_j$, it is helpful to note that the rate function $I$ is the Legendre transform of $\ee[\ln \cosh(\beta \mathfrak{h})]$. The maximum is attained when $\xi_j(y_j) = y_j/(x_j-x_j-1) $ equals the derivative of  $\ee[\ln \cosh(\beta \mathfrak{h})]$  with respect to $\beta$. We see that if $\beta$ is small enough, all constraints are fulfilled and the maximum is given by 
\[ \Phi(\beta,\mathfrak{h}) = \ln 2 + \frac{\beta^2}{2} \left( \sum_{j =1}^{n} a_j \right) + \ee[\ln \cosh(\beta \mathfrak{h})].   \]
Since in the unconstrained variational problem the optimal value $E_j$ is unbounded as $\beta$ increases, the above considerations will hold true up to some critical value $\beta_c$, where the first constraining inequality is not satisfied, i.e., the maximum is located at the boundary of $\mathcal{C}$. Due to the structure of the optimal $(\pmb{E},\pmb{y})$ in the unconstrained setting, this needs to be the inequality corresponding to the highest slope $(a_1+\cdots+a_k)/x_k$ which is attained at $k = n$ since $m = 1$. If we denote the optimal configuration of the unconstrained problem at $\beta_c$ by $(\pmb{E}^c,\pmb{y}^c)$ we thus have 
\[  S(\pmb{E}^c,\pmb{y}^c) = 0.  \]	
From there, one obtains after some algebra  the self-consistency equation for $\beta_c$:
\[ \frac{\sum_{j} a_j}{2} \beta_c^2 =  \ln 2 + \ee[\ln \cosh \beta_c \mathfrak{h}] - \beta_c\ee[\mathfrak{h} \tanh \beta_c \mathfrak{h}].    \]
Furthermore,
\[ \max_{(\pmb{E},\pmb{y}) \in \bar{\mathcal{C}}} \beta_c(E_1+\cdots+E_n+y_1+\cdots+y_n) + S(\pmb{E},\pmb{y}) = \max_{(\pmb{E},\pmb{y}) \in \bar{\mathcal{C}}} \beta_c(E_1+\cdots+E_n+y_1+\cdots+y_n),  \]
which is clearly still a valid identity for $\beta > \beta_c$. 
We conclude that $\Phi$ becomes a linear function of $\beta$ for $\beta > \beta_c$ and the slope agrees with the derivative of $\Phi$ at $\beta_c$, i.e.
\[ \Phi(\beta,\mathfrak{h}) = \beta\left( \beta_c \sum_{j=1}^n a_j + \ee[\mathfrak{h} \tanh \beta_c \mathfrak{h}]) \right).  \]
This is exactly the statement of Theorem~\ref{prop:gremh} in the case $m =1$.

Now, suppose that the assertions are true for some $m$, and we want to show that it is still the case for $m+1$.
Let us write $\pmb{E}_{<m},\pmb{E}_{>m}, \pmb{y}_{<m}$ and $\pmb{y}_{>m}$, where the vectors denote the energy configurations corresponding to the first $m$ segments and the last segment, respectively.  Similarly, we set $\mathcal{C}_m$ the set of the constraints related to the first $m$ segments.
If we only demand that the energy configuration $(\pmb{E}_{<m},\pmb{y}_{<m})$ satisfy the constraints $\mathcal{C}_m$, then using the induction hypothesis and the analysis of the case $m=1$, we end up with the expression 
\[ \sum_{l=1}^{m} \varphi^{(l)}(\beta,\mathfrak{h}) + (1-y_m)\ln 2 + \frac{\beta^2}{2} \left( \sum_{j \in I_{m+1}} a_j \right) + (1-y_m) \ee[\ln \cosh(\beta \mathfrak{h})] \] 
for the limit of the pressure, where $I_{m+1}$ denotes the last segment. This is indeed a solution if $\beta \leq \beta^{(m)}_c$, since the remaining constraints are also verified by the $m$-level solution $(\pmb{E}_{<m},\pmb{y}_{<m})$ and the unconstrained solution $(\pmb{E}_{>m},\pmb{y}_{>m})$ due to the concave-hull structure. We note that for $\beta > \beta^{(m)}_c$, we only need to consider the $n$-th inequality (for the same reason as in the case $m =1$) which then may be rewritten as 
\[  (y_{m+1}-y_m)  \ln 2 > \sum_{j \in I_{m+1}} (2a_j)^{-1} +(x_j - x_{j-1})I(\xi_j(y_j)) .  \]
Thus, the situation is analogous to the case  $m=1$  and the same arguments lead to the expression for $\beta_c^{(m+1)}$ and the pressure $\Phi$ if $\beta > \beta_c^{(m+1)}$.
\end{proof}   

\subsection{From GREM to QGREM: application of the peeling principle}

We now consider the QGREM with a random magnetic field in $z$-direction as in Theorem~\ref{thm:qcremh}.

\begin{theorem}\label{prop:qgremh}
	Let $U(\pmb{\sigma})$ be a GREM process as in \eqref{eq:ugrem},  $\beta \geq 0$ and $\mathfrak{h},\mathfrak{b}$  absolutely integrable random variables. Then, almost surely
	\begin{equation}\label{eq:qgremh}
	\lim_{N \to \infty} \Phi_N(\beta,\mathfrak{h},\mathfrak{b}) = \max_{0 \leq k \leq m} \left( \sum_{l=1}^{k} \varphi^{(l)}(\beta,\mathfrak{h}) + (1-y_k) \ee[\ln 2 \cosh(\sqrt{\mathfrak{b}^2+\mathfrak{h}^2})] \right) .
	\end{equation}
	Here, the empty sum is interpreted to be zero.
\end{theorem}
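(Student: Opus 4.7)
The plan is to follow the same two-sided strategy as in the proof of Theorem~\ref{thm:hierar}: a Gibbs variational lower bound combined with a peeling upper bound from \cite{MW20c}, using Theorem~\ref{prop:gremh} on the first $k$ of the $m$ concave-hull segments of $\bar A$ and a decoupled single-spin computation on the remaining spins. The key observation is that once the transversal field is truncated so as not to cross block boundaries, the restricted Hamiltonian decomposes as a classical GREM-with-random-field on the first $\lceil y_k N\rceil$ spins, plus a sum of commuting single-site quantum Hamiltonians on the remaining spins, each with pressure $\ln 2\cosh\beta\sqrt{h_j^2+b_j^2}$ whose empirical average converges almost surely to $\ee[\ln 2\cosh\beta\sqrt{\mathfrak b^2+\mathfrak h^2}]$ by the strong law of large numbers.

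For the lower bound, for each fixed $k\in\{0,1,\dots,m\}$ I would construct a trial density matrix of tensor-product form whose classical factor on the first $\lceil y_k N\rceil$ spins is the Gibbs state of the restricted Hamiltonian $U_{\leq k} - h_{\leq k}$ (the GREM cut at level $y_k$ together with the longitudinal field restricted to the corresponding spins), and whose quantum factor on the remaining spins is the product of single-site Gibbs states of $-h_j\sigma_j - b_j\pmb{s}_j$ for $j>\lceil y_k N\rceil$. Since this trial state is diagonal in the $\pmb\sigma$-basis on the first $\lceil y_k N\rceil$ spins, the cross-term involving the truncated-away part of the transversal field vanishes, and since the residual GREM increment $U - U_{\leq k}$ conditioned on the first $k$ blocks is centered Gaussian and independent of the chosen classical block, its Gibbs average vanishes as well. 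Gibbs' variational principle together with Theorem~\ref{prop:gremh} then produces the lower bound, and one maximizes over~$k$.

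For the upper bound, I would proceed by peeling as in Step~3.1 of the proof of Theorem~\ref{thm:hierar}: the transversal field $B$ is truncated by removing those matrix elements that connect the reference state to its neighbors at each breakpoint $y_k$, producing an operator $B'$ commuting with the block structure and satisfying $\|B-B'\|=o(N)$ by Lemma~A.2 of \cite{MW20c}. An inclusion-exclusion summation over spin configurations at the cuts then factorises the trace and yields
\[
\tr e^{-\beta(U - h - B)} \leq e^{o(N)}\sum_{k=0}^{m} \tr_{\ell^2(\mathcal{Q}_{\lceil y_k N\rceil})} e^{-\beta(U_{\leq k} - h_{\leq k})}\ \prod_{j>\lceil y_k N\rceil} \tr e^{-\beta(-h_j\sigma_j - b_j\pmb{s}_j)},
\]
and each factor is handled as in the lower bound.

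The main obstacle, as flagged in the paper's outline, is the preservation of the concave-hull structure under the decomposition: a priori one would like to cut at arbitrary $y\in[0,1]$, but Theorem~\ref{prop:gremh} applied to $U_{\leq y}$ only yields the clean partial-pressure sum $\sum_{l=1}^k \varphi^{(l)}(\beta,\mathfrak h)$ when $y$ coincides with a breakpoint $y_k$ of $\bar A$. To reduce the continuous supremum over $y$ to the discrete maximum over $k\in\{0,1,\dots,m\}$ in~\eqref{eq:qgremh}, one invokes an argument analogous to \cite[Lemma~3.1]{MW20c}: within a single linear segment of $\bar A$ the concave hull of the restricted distribution function $A^{(0,y)}$ coincides with the straight line, so the partial pressure depends on $y$ in that segment only through its right endpoint value, and the variational expression over $y$ is maximised at one of the breakpoints $y_k$. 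This step is the one that genuinely uses the concave-hull structure and is where the argument goes beyond a routine application of the peeling principle.
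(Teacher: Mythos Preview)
Your overall plan matches the paper's structure---peeling, then Theorem~\ref{prop:gremh}, then a concave-hull reduction---but two of the three steps have genuine gaps.

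\textbf{Upper bound.} The truncation you borrow from Step~3.1 of the proof of Theorem~\ref{thm:hierar} is specific to the hierarchical-field setting: it removes only those matrix elements of $B$ linking the \emph{reference state} $\pmb\sigma^0$ to its neighbours, and the subsequent inclusion-exclusion in Step~3.2 sums over the first block in which $\pmb\sigma$ differs from $\pmb\sigma^0$. Here there is no reference state, so neither construction is available; and since $B=\sum_j b_j\pmb s_j$ is already a sum of single-site operators, a block-boundary truncation of $B$ alone is vacuous. The paper instead invokes the peeling principle \cite[Theorem~2.3]{MW20c} (which peels the GREM, not the field) as a black box, obtaining the asymptotic identity with the maximum over \emph{all} $n$ GREM levels $x_j$, not the $m$ concave-hull breakpoints $y_l$. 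Your displayed inequality with the sum over $k=0,\dots,m$ is therefore not what any form of peeling delivers directly; it is the \emph{conclusion} after the reduction step, not an intermediate bound.

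\textbf{Concave-hull reduction.} Your geometric claim that ``the concave hull of $A^{(0,y)}$ coincides with the straight line'' on a segment of $\bar A$ is false. For $y_l<x_k<y_{l+1}$ one has $A(x_k)<\bar A(x_k)$, so the last segment of the concave hull of $A^{(0,x_k)}$ has slope $(A(x_k)-A(y_l))/(x_k-y_l)<\gamma_{l+1}$, and Theorem~\ref{prop:gremh} applied to $U^{(x_k)}$ does \emph{not} produce $\sum_{l'\le l}\varphi^{(l')}$ plus a clean fraction of $\varphi^{(l+1)}$. The paper's fix is a Slepian comparison (conditional on the $(h_i)$): dominate $U^{(x_k)}$ first by a process $Y$ that pushes all the mass between $y_l$ and $x_k$ to a single jump at $x_k$, and then by a process $Z$ whose final jump is raised from $A(x_k)$ to the line value $A(y_l)+\tfrac{x_k-y_l}{y_{l+1}-y_l}(A(y_{l+1})-A(y_l))$. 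The pressure of $Z$ is exactly $\sum_{l'\le l}\varphi^{(l')}+\tfrac{x_k-y_l}{y_{l+1}-y_l}\varphi^{(l+1)}$, a convex combination of the values at $y_l$ and $y_{l+1}$, hence bounded by their maximum.

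A minor point on your lower bound: $\tr[\rho_{k,\beta}(U-U_{\le k})]$ does not vanish almost surely, only in conditional expectation; it is $O(\sqrt N)$ by Gaussian concentration, which is enough. The paper sidesteps this entirely, since the peeling principle already gives a two-sided asymptotic identity and no separate variational lower bound is needed.
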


\begin{proof}
	We recall the definition of the cut GREM $ U^{(x_k)} := U^{(0,x_k)}$ which may be represented as 
	\[ U^{(x_k)}(\pmb{\sigma}_1\pmb{\sigma}_2\cdots \pmb{\sigma}_k)  = \sqrt{a_1} X_{\pmb{\sigma}_1} + \sqrt{a_2}X_{\pmb{\sigma}_1\pmb{\sigma}_2} + \cdots + \sqrt{a_n}X_{\pmb{\sigma}_1\pmb{\sigma}_2\cdots \pmb{\sigma}_k}. \] 
	An iterative application of the peeling principle (\cite[Theorem 2.3]{MW20c}; see also the proof of \cite[Corollary 2.7]{MW20c}) yields almost surely
	\[ \limsup_{N \to \infty} \left|\Phi_N(\beta,\mathfrak{h},\mathfrak{b}) - \max_{0 \leq k \leq n} \frac1N \ln \tr e^{-\beta(U^{(x_k)} - h(\pmb{\sigma}) - B^{(2,x_k)})}\right| = 0.\]
	The cut-magnetic field $B^{(2,x_k)}$ was defined in \eqref{eq:cutB}. We naturally split the longitudinal field,
	\[ h(\pmb{\sigma}) = h^{(1,x_k)}(\pmb{\sigma}_1 \cdots \pmb{\sigma}_k ) + h^{(2,x_k)}(\pmb{\sigma}_{k+1} \cdots \pmb{\sigma}_n ); \quad h^{(1,x_k)}(\pmb{\sigma}_1 \cdots \pmb{\sigma}_k ) \coloneqq \sum_{i =1}^{\lceil x_k N \rceil} h_i \sigma_i \] 
	 and apply Theorem~\ref{prop:gremh} to the Hamiltonian 
	 $H^{(x_k)} \coloneqq U^{(x_k)} - h^{(1,x_k)}$. Together with the strong law of large numbers for $ h^{(2,x_k)}(\pmb{\sigma}_{k+1} \cdots \pmb{\sigma}_n )+ B^{(2,x_k)}$. Thus we arrive at 
 \begin{equation}\label{eq:limx}
  \lim_{N \to \infty} \Phi_N(\beta,\mathfrak{h},\mathfrak{b})
  = \max_{0 \leq k \leq n} \left( \Phi^{(x_k)}(\beta,\mathfrak{h}) + (1-x_k)  \ee[\ln 2 \cosh(\sqrt{\mathfrak{b}^2+\mathfrak{h}^2})] \right), \end{equation}
  where $\Phi^{(x_k)}(\beta,\mathfrak{h})$ denotes the limit of the pressure of the Hamiltonian  $H^{(x_k)}$ restricted 
  to the Hilbert space of subgraph $\mathcal{Q}_{\lceil x_k N \rceil}$ spanned by $\pmb{\sigma}_1 \cdots \pmb{\sigma}_k$. (Note that for $H^{(x_k)}$ on the total graph $\mathcal{Q}_N$ the resulting pressure is $\Phi^{(x_k)}(\beta,\mathfrak{h}) + (1-x_k) \ln 2)$.)
  
 If the cut point coincides with and endpoint of the concave hull. i.e. $x_k = y_j$ for some $j$, we have 
  \[ \Phi^{(y_j)}(\beta,\mathfrak{h}) =  \sum_{l=1}^{j} \varphi^{(l)}(\beta,\mathfrak{h}) . \]
 Thus, it only remains to show that the maximum in \eqref{eq:limx} is attained at some $y_l$. We follow the comparison argument presented in the proof of \cite[Lemma 3.1]{MW20c}.
  If $\{ x_0,\ldots, x_n \} = \{ y_0,\ldots, y_m \}$, the assertion is trivial. So, let $y_l < x_k < y_{l+1}$. We recall that distribution function $A^{(x_k)}$ associated with $U^{(x_k)}$ is given by
  \[ A^{(x_k)} = \begin{cases}
  A(x) & \text{ if } x \leq x_k, \\
  A(x_k) & \text{ else.}
  \end{cases}.  \]
  We introduce the Gaussian processes $Y$ and $Z$ of GREM type with the distribution functions 
  \[ A_Y(x) \coloneqq \begin{cases}
  A(x) & \text{ if } x \leq y_l, \\
  A(y_l) & \text{ if } y_l < x < x_k ,\\
  A(x_k) & \text{ if } x \geq x_k ,
  \end{cases} 
\quad A_Z(x) \coloneqq \begin{cases}
  A(x) & \text{ if } x \leq y_l, \\
  A(y_l) & \text{ if } y_l < x < x_k, \\
  A(y_l) + \frac{x_k - y_l}{y_{l+1} - y_l} (A(y_{l+1}) - A(y_l) , & \text{ if } x \geq x_k.
  \end{cases}  \] 
  which shall be independent of the weights  $(h_i)$
After conditioning on the random weights $(h_i)$,   Slepian's lemma (cf.~\cite{Bov06}) and the independence of $(h_i)$ and the GREM processes imply: 
  \begin{align}\label{eq:Slepian} \lim_{N \to \infty} \frac1N \ln \tr_{| \ell^2(\mathcal{Q}_{x_k N})} e^{-\beta(U^{(x_k)}- h^{(1,x_k)})} &\leq \lim_{N \to \infty} \frac1N \ln \tr_{| \ell^2(\mathcal{Q}_{x_k N})} e^{-\beta (\sqrt{N} Y - h^{(1,x_k)})} \notag \\ 
  &\leq \lim_{N \to \infty} \frac1N \ln \tr_{| \ell^2(\mathcal{Q}_{x_k N})} e^{-\beta (\sqrt{N} Z - h^{(1,x_k)})}. 
   \end{align}
For the second inequality, we recall that $A$ is majorized by its concave hull $\bar{A}$ and agrees with $\bar{A}$ at $y_l$ and $y_{l+1}$:
  \[ A(x_k) \leq A(y_l) + \frac{x_k - y_l}{y_{l+1} - y_l} \left(A(y_{l+1}) - A(y_l) \right).  \]
  Since the pressure is an increasing function of the jump heights (cf.~\eqref{eq:part}), we hence arrive at the second bound in~\eqref{eq:Slepian}. The resulting pressure is computed easily in terms of the partial pressures \eqref{eq:part} corresponding to $ A $:
  \[ \lim_{N \to \infty} \frac1N \ln \tr_{| \mathcal{Q}_{x_k N}} e^{-\beta (\sqrt{N} Z - h^{(1,x_k)}) } = \sum_{j = 1}^{l} \varphi^{(j)}(\beta,\mathfrak{h})+ \frac{x_k-y_l}{y_{l+1} - y_l} \varphi^{(j+1)}(\beta,,\mathfrak{h}). \]
Using the abbreviation $p(\beta,\mathfrak{h},\mathfrak{b}) \coloneqq \ee[\ln 2 \cosh(\beta \sqrt{ \mathfrak{b}^2+\mathfrak{h}^2})]$ we thus conclude
  \[\begin{split} & \lim_{N \to \infty} \frac1N \ln\tr_{| \ell^2(\mathcal{Q}_{x_k N})}  e^{-\beta (U^{(x_k)}- h^{(1,x_k)})} + (1-x_k) p(\beta,\mathfrak{h},\mathfrak{b}) \\ &\leq \sum_{j = 1}^{l} \varphi^{(j)}(\beta,\mathfrak{h}) + (1-y_{l}) p(\beta,\mathfrak{h},\mathfrak{b}) + \frac{x_k-y_l}{y_{l+1} - y_l} \left( \varphi^{(l+1)}(\beta,\mathfrak{h}) - (y_{l+1} - y_l)  p(\beta,\mathfrak{h},\mathfrak{b}) \right). \end{split} \]
  Depending on the sign of the term in the last bracket, we have 
  \[  \lim_{N \to \infty} \frac1N \ln\tr_{| \ell^2(\mathcal{Q}_{x_k N})}  e^{-\beta (\sqrt{N} X^{(x_k)}- V^{(1,x_k)})} + (1-x_k) p(\beta,\mathfrak{h},\mathfrak{b}) \leq \sum_{j = 1}^{l} \varphi^{(j)}(\beta,\mathfrak{h}) + (1-y_{l}) p(\beta,\mathfrak{h},\mathfrak{b})  \]
or the sum on the right side runs to $ l+1 $ and $ y_l $ is replaced by $ y_{l+1} $.  
  Consequently, the maximal pressure is indeed attained at some $y_l$.	
\end{proof}

\subsection{Finishing the proof: the interpolation argument}

Finally, we will lift Theorem~\ref{prop:qgremh} to the case of a general QGREM. The idea is to show that the left and right side of \eqref{eq:qcremh} are continuous with respect to the distribution function $A$ and the uniform norm. We start with the continuity of the right side, i.e., spelling out explicitly the $ A $-dependence of quantities for the moment, we need to show that 
\[
 \Phi(\beta,\mathfrak{h},\mathfrak{b},A) = \sup_{0 \leq z \leq 1} \left( \int_{0}^{z} \varphi(\beta,\mathfrak{h},A,x) \, dx + (1-z) \ee[\ln 2 \cosh(\beta \sqrt{\mathfrak{b}^2+\mathfrak{h}^2})] \right),
\]
is continuous in $A$. We recall that the density is given by
\[
\varphi(\beta,\mathfrak{h},A,x) \coloneqq \begin{cases} 
\ln 2 + \bar{a}(x)\frac{\beta^2}{2} + \ee[\ln \cosh \beta \mathfrak{h}] & if \quad \beta \leq \beta_c(A,x) \\
\beta(\bar{a}(x)\beta_c(A,x) + \ee[\mathfrak{h} \tanh \beta_c(A,x) \mathfrak{h}]) & if \quad \beta > \beta_c(A,x) 
\end{cases}
\]
where the critical temperature $\beta_c(A,x))$ is the unique positive solution of the self-consistency equation 
\[
\frac{\bar{a}(x)}{2}\beta_c(A,x)^2 =  \ln 2 + \ee[\ln \cosh \beta_c(A,x) \mathfrak{h}] - \beta_c(A,x) \ee[\mathfrak{h} \tanh \beta_c(A,x) \mathfrak{h}]. 
\]

\begin{lemma}\label{lem:cont}
	Let $\beta \geq 0$ and $\mathfrak{b},\mathfrak{h}$ be absolutely integrable random variables. Moreover, let $(A_n)_{n \in \nn}$, $A$ be distribution functions on $[0,1]$ such that $A_n$ converges uniformly to $A$. Then,
	\begin{equation}
	\lim_{n \to \infty} \Phi(\beta,\mathfrak{h},\mathfrak{b},A_n) = \Phi(\beta,\mathfrak{h},\mathfrak{b},A).
	\end{equation}
\end{lemma}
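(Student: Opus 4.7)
The plan is to reduce the continuity of $\Phi$ to pointwise a.e.\ convergence of the integrand $\varphi(\beta,\mathfrak{h},A_n,x)$ on $[0,1]$ and then apply a dominated convergence / Scheffé argument, after which the sup over $z$ transfers automatically because the corresponding functions converge uniformly. The first preparatory step is to pass uniform convergence from $A_n$ to the concave hulls: since taking concave hulls is monotone with respect to pointwise ordering and commutes with addition of constants, $A \leq A_n + \|A_n-A\|_\infty$ implies $\bar A \leq \bar A_n + \|A_n-A\|_\infty$, and by symmetry $\|\bar A_n - \bar A\|_\infty \leq \|A_n - A\|_\infty \to 0$.

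The second step is to promote this to convergence of derivatives. A classical fact about concave functions states that uniform convergence $\bar A_n \to \bar A$ forces the right derivatives $\bar a_n(x) \to \bar a(x)$ at every point $x$ where $\bar A$ is differentiable, hence Lebesgue-a.e.\ on $(0,1)$. Combining with $\int_0^1 \bar a_n = \bar A_n(1)-\bar A_n(0) \to \bar A(1)-\bar A(0) = \int_0^1 \bar a$, Scheffé's lemma upgrades this to $\bar a_n \to \bar a$ in $L^1[0,1]$. The third step is the continuity of the map $\bar a \mapsto \varphi(\beta,\mathfrak{h},\bar a)$. The function $f(\beta'):= \ln 2 + \mathbb{E}[\ln\cosh(\beta'\mathfrak{h})] - \beta'\mathbb{E}[\mathfrak{h}\tanh(\beta'\mathfrak{h})]$ has derivative $-\beta'\mathbb{E}[\mathfrak{h}^2\operatorname{sech}^2(\beta'\mathfrak{h})] \leq 0$, so it is strictly decreasing on $[0,\infty)$ with $f(0)=\ln 2$ and $\lim_{\beta'\to\infty} f(\beta')=0$; hence \eqref{eq:bcrith} implicitly defines a continuous bijection $\bar a \mapsto \beta_c(\bar a)$ on $(0,\infty)$, with $\beta_c(\bar a)\to\infty$ as $\bar a\downarrow 0$. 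The piecewise formula \eqref{eq:density} is continuous across the transition $\beta=\beta_c(\bar a)$ because the two branches agree there: substituting $\bar a\beta_c^2/2 = f(\beta_c)$ into the first branch gives exactly the value $\bar a\beta_c^2 + \beta_c \mathbb E[\mathfrak h\tanh\beta_c \mathfrak h]$ of the second. Composing with the a.e.\ convergence $\bar a_n(x) \to \bar a(x)$ yields $\varphi(\beta,\mathfrak{h},A_n,x) \to \varphi(\beta,\mathfrak{h},A,x)$ a.e.

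For the final step, observe that the affine continuation of the first branch always dominates the second branch beyond $\beta_c$ (both branches agree at $\beta=\beta_c$, the first is convex in $\beta$ and the second is its tangent line at $\beta_c$), so in all cases
\[
\varphi(\beta,\mathfrak{h},A,x) \leq \ln 2 + \tfrac{\beta^2}{2}\bar a(x) + \mathbb E[\ln\cosh(\beta\mathfrak h)].
\]
The $L^1$ convergence of the right-hand side (from Step 2) provides a convergent dominating sequence, and the generalized dominated convergence theorem gives $\|\varphi_n - \varphi\|_{L^1[0,1]} \to 0$. Writing $F_n(z)$ for the bracket in \eqref{eq:qcremh} corresponding to $A_n$, one has $\|F_n - F\|_\infty \leq \|\varphi_n - \varphi\|_{L^1[0,1]} + |\mathbb E[\ln 2\cosh(\beta\sqrt{\mathfrak b^2+\mathfrak h^2})]-\cdots|$, where the second contribution vanishes (it is $A$-independent), so $F_n \to F$ uniformly on $[0,1]$, and $\sup_z F_n(z) \to \sup_z F(z)$ follows immediately. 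The main technical subtlety is the matching of the two branches of $\varphi$ at $\beta=\beta_c$ (step 3), which is what makes $\varphi$ continuous in $\bar a$ and allows the a.e.\ convergence of derivatives to propagate; a secondary subtlety is that $\bar a(x)$ may blow up as $x\downarrow 0$ (as happens e.g.\ for $A(x)=\sqrt{x}$), so one cannot use a naive uniform $L^\infty$ bound and must invoke the $L^1$ framework supplied by Scheffé.
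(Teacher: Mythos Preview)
Your proof is correct and follows essentially the same approach as the paper's: pass uniform convergence to the concave hulls, deduce a.e.\ convergence of the right derivatives $\bar a_n\to\bar a$, use continuity of $\bar a\mapsto\beta_c$ (and hence of $\varphi$) to get $\varphi_n\to\varphi$ a.e., and then upgrade to $L^1$ convergence via the upper bound $\varphi\le \ln 2+\tfrac{\beta^2}{2}\bar a+\ee[\ln\cosh\beta\mathfrak h]$. The only minor difference is in this last step: you invoke Scheff\'e's lemma together with the generalized dominated convergence theorem, whereas the paper splits $[0,1]=[0,\delta]\cup[\delta,1]$ and handles the two pieces by hand; both devices serve exactly the same purpose of controlling the possible blow-up of $\bar a$ at $x=0$.
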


\begin{proof}
	It suffices to show that 
	\[ \lim_{n \to \infty} \int_{0}^{1} |\varphi(\beta,\mathfrak{h},A,x) - \varphi(\beta,\mathfrak{h},A_n,x)| \, dx = 0. \]
	We first prove that the integrand converges almost everywhere (with respect to the Lebesgue measure and $x$)
	to zero. 
	One easily sees that the concave hulls $\bar{A}_n$ converge uniformly to  $\bar{A}$ and the right derivatives $\bar{a}_n(x)$ converge to $\bar{a}(x)$ at any $x$, where $\bar{a}(x)$ is continuous (cf.~the proof of Lemma 3.3 in \cite{MW20c}). Since $\bar{A}$ is concave, this ensures that $\bar{a}_n(x)$ converge almost everywhere to $\bar{a}(x)$. Next, we observe that $\beta_c(x,A)$ is a continuous function of $\bar{a}(x)$ by the implicit function theorem and, thus, $\beta_c(x,A_n)$ converges almost everywhere to $\beta_c(x,A)$. This implies that $\varphi(\beta,\mathfrak{h},A_n,x)$ converges almost everywhere. Now we pick some $\delta >0$ and note that the sequence $\varphi(\beta,\mathfrak{h},A_n,x)$ is uniformly bounded due to the general bound 
	\[ 0 \leq \varphi(\beta,\mathfrak{h},A_n,x) \leq  \ln 2 + \bar{a}_n(x)\frac{\beta^2}{2} + \ee[\ln \cosh \beta \mathfrak{h}] \]
	and the monotonicity of the derivatives $\bar{a}_n(x)$. We conclude that for any $\delta > 0$
	\[  \lim_{n \to \infty} \int_{\delta}^{1} |\varphi(\beta,\mathfrak{h},A,x) - \varphi(\beta,\mathfrak{h},A_n,x)| \, dx = 0 .  \] 
	Using the above bound on $[0,\delta]$, we obtain 
	\[  \int_{0}^{\delta} |\varphi(\beta,\mathfrak{h},A,x) - \varphi(\beta,\mathfrak{h},A_n,x)| \, dx \leq \delta(2 \ln 2 + (\bar{A}_n(\delta)+\bar{A}(\delta))\frac{\beta^2}{2} + 2 \ee[\ln \cosh \beta \mathfrak{h}]) \]
	which vanishes if we take the limit $n \to \infty$ and then $\delta \to 0$.
\end{proof}

We turn to the interpolation argument for the left side in \eqref{eq:qcremh}. Let $U,U^\prime$ be two GREM processes with distribution functions $A$, $A^\prime$ and pressures $\Phi(\beta,\mathfrak{h},\mathfrak{b},A)$, $\Phi(\beta,\mathfrak{h},\mathfrak{b},A^\prime)$.  From \cite[Equation (2.16)]{MW20c} we conclude
\begin{equation}\label{eq:inter} |\ee[\Phi(\beta,\mathfrak{h},\mathfrak{b},A) - \Phi(\beta,\mathfrak{h},\mathfrak{b},A^\prime)]| \leq \beta^2 \|A - A^\prime \|_{\infty}, \end{equation}
The Gaussian concentration inequality (cf.~\cite[Proposition 2.9]{MW20c}) guarantees the almost-sure convergence
\[ \limsup_{N \to \infty} |\ee[\Phi(\beta,\mathfrak{h},\mathfrak{b},A)] - \Phi(\beta,\mathfrak{h},\mathfrak{b},A) | = 0.  \]  
We are ready to finish the proof of Theorem~\ref{thm:qcremh}:
\begin{proof}[Proof of Theorem~\ref{thm:qcremh}]
We fix $\beta \geq 0$ and absolutely integrable random variables $\mathfrak{b},\mathfrak{h}$ and use the shorthand notations $\Phi(A) \coloneqq \Phi(\beta,\mathfrak{h},\mathfrak{b},A)$. Let $U$ be a GREM process with distribution function $A$. We pick some $\epsilon > 0$ and an finite-level GREM  $U^\prime$ with distribution function $A^\prime$ such that $\|A - A^\prime\|_{\infty} \leq \epsilon$ and $ |\Phi(A) = \Phi(A^\prime)| \leq \epsilon $.
This is possible thanks to Lemma~\ref{lem:cont}.
We then obtain 
\[ \begin{split}
\limsup_{N \to \infty} |\Phi_N(A) - \Phi(A)|& \leq \limsup_{N \to \infty}  |\Phi_N(A) - \Phi_N(A^\prime)| + |\Phi_N(A^\prime) - \Phi(A^\prime)| + |\Phi(A) - \Phi(A^\prime)| \\& \leq (\beta^2+1) \epsilon .
\end{split} \] 
The final line follows from our preparatory estimate \eqref{eq:inter} and Theorem~\ref{prop:qgremh}, which coincides with Theorem~\ref{thm:qcremh} for an $ n $-level GREM. Since $\epsilon > 0$ is arbitrary, this proves \eqref{eq:qcremh}.

The remaining assertions now follow easily: $\varphi(\beta,\mathfrak{h},x)$ is clearly an increasing function of $\bar{a}(X)$ which in turn is decreasing in $x$. Thus, $\varphi(\beta,\mathfrak{h},x)$ is a decreasing function of $x$. Similarly, the critical inverse temperature $\beta_c(x)$ is increasing as it is a decreasing function of $\bar{a}(x)$. Finally, the fact that $\varphi(\beta,\mathfrak{h},x)$ is increasing and convex in $\beta$ directly follows from \eqref{eq:density}.
\end{proof}

\appendix
\section{Proof of Corollary~\ref{cor:qremh} and Proposition~\ref{prop:critbg}}\label{sec:pfqrem}

We start with the straightforward proof of Corollary~\ref{cor:qremh}:
\begin{proof}[Proof of  Corollary~\ref{cor:qremh} ]
	To apply Theorem~\ref{thm:qcremh}, we note that in the case of the QREM we have $\varphi(\beta,h,x) = \Phi^{\mathrm{REM}}(\beta,h)$ for any $x$. So, we directly obtain \eqref{eq:qremh}. It remains to show that the self-consistency equation
	 \[\frac{1}{2}\beta_c^2 =\ln 2 + \ln \cosh \beta_c h - \beta_c h \tanh \beta_c h, \]
	 which get from Theorem~\ref{thm:qcremh} is equivalent to 
	 \eqref{eq:bcrit2}, i.e. $ \beta_c(h)^2 = 2 r(\tanh(\beta_c(h)h)) $. 
	 This follows from the elementary computation 
	 \[ \begin{split} & r(\tanh(x)) = \ln2-\frac12 \left( (1-\tanh(x)) \ln (1-\tanh(x))+ (1+\tanh(x)) \ln (1+\tanh(x))\right  ) \\
	 &= \ln 2 + \ln \cosh x - \frac12\left( (1-\tanh(x)) \ln (\cosh x - \sinh x)+ (1+\tanh(x)) \ln (\cosh x + \sinh x)\right  ) \\
	 &= \ln 2 + \ln \cosh x - x \tanh x \end{split}  \]
	 for any $x \in \rr$.
\end{proof}

The proof of Proposition \ref{prop:critbg} is based on multiple elementary, but quite lengthy, computations. 
\begin{proof}[Proof of Proposition \ref{prop:critbg} ] 
1.~The defining equation \eqref{eq:bcrit2} immediately implies that $\beta_c(h)$ is a strictly decreasing function. The Taylor expansions $r(y) = \ln2 - y^2/2 + \mathcal{O}(y^4))$ and $\tanh(y) = y + \mathcal{O}(y^2)$ yield for small $h > 0$
		\[\frac12 \beta_c(h)^2 = \ln2 - \frac{(\beta_c(h)h)^2}{2} + \mathcal{O}(h^4), \]
		which in turn leads to the Taylor expansion of $\beta_c(h)$ in the small field limit. 
		
		By inspection of \eqref{eq:bcrit2} as 	$h \to \infty $, the critical inverse temperature $\beta_c(h)$ tends to zero, but we still have that $h \beta_c(h) \to \infty$. 
		Moreover, we recall that $  \tanh(y) = 1- 2 e^{-2y} + \mathcal{O}(e^{-4y}) $ 
		 for large $y$ and $ r(1-x) = \frac12 x \ln(1/x)  + \mathcal{O}(x) $ for small $x$.
		After some algebra, we arrive at the asymptotic equation
		$ 2 \beta_c(h) h e^{2 \beta_c(h) h} = 8 h^2+ \mathcal{O}(h) $. 
		In particular,
		\[ \lim_{h \to \infty} \frac{2 \beta_c(h) h}{W(8h^2)} = \lim_{h \to \infty} \frac{ \beta_c(h) h}{\ln h} =1, \]
		where W denotes Lambert W-function.\\

		\noindent 
		2.~We first consider the high temperature limit. For small $\beta > 0$ a Taylor expansion yields
		\[ \arcosh\left(\frac12 \exp(\Phi^{\mathrm{REM}}(\beta,h))\right) =  \arcosh\left(1+ \frac{1}{2} \beta^2(1+h^2) + \mathcal{O}(\beta^4) \right) = \sqrt{1+h^2}\beta + \mathcal{O}(\beta^2),  \]
		from which we conclude $\Gamma_c(0,h) = 1$. As the term $\arcosh\left(\frac12 \exp(\Phi^{\mathrm{REM}}(\beta,h))\right)/\beta$ converges to the absolute value of the ground state energy as $\beta \to \infty$, we obtain the claim concerning the low temperature limit.\\
		
		\noindent 
		3.~Let us fix some $\beta >0$. We show that 
		\[ g(h) = \arcosh\left(\frac12 \exp(\Phi^{\mathrm{REM}}(\beta,h))\right)^2 - \beta^2 h^2   \]
		is strictly increasing which is equivalent to the monotonicity of $\Gamma_c(\beta,h)$. We compute the derivative for $h > 0$
		\[ \begin{split}
		g^\prime(h) &= 2 \arcosh\left(\frac12 \exp(\Phi^{\mathrm{REM}}(\beta,h))\right) \frac{\frac12 e^{\Phi^{\mathrm{REM}}(\beta,h)}}{\sqrt{\frac14 e^{2\Phi^{\mathrm{REM}}(\beta,h)}-1}} \frac{\partial \Phi^{\mathrm{REM}}(\beta,h)}{\partial h} - 2 \beta^2 h \\ & = 2 \beta  \left(\arcosh\left(\frac12 \exp(\Phi^{\mathrm{REM}}(\beta,h))\right) \frac{\tanh(\min\{\beta,\beta_c(h) \} h)}{\tanh(\arcosh(1/2\exp(\Phi^{\mathrm{REM}}(\beta,h))))} - \beta h \right)
		\end{split} \]
		We first note that $y/\tanh(y)$ is an increasing function. In the case $\beta \leq \beta_c(h)$ we further use that
		$1/2\exp(\Phi^{\mathrm{REM}}(\beta,h)) > \cosh(\beta h) $. 
		Hence $g'(h) > 0$ is an easy consequence of these observations for $\beta \leq \beta_c$.  On the other hand, if $\beta > \beta_c$ we use the  convexity of 
		\[ \eta(y) \coloneqq \frac{\arcosh(e^y)}{\tanh(\arcosh(e^y))},  \]
		from which we obtain 
		\[ \frac{\arcosh(\frac12 \exp(\Phi^{\mathrm{REM}}(\beta,h)) }{\tanh(\arcosh(1/2\exp(\Phi^{\mathrm{REM}}(\beta,h))))} > \frac{\beta h}{\tanh(\beta_c(h)h)}  \]
		as the left half side is a convex function of $\beta$ and the inequality holds true for $\beta = \beta_c(h)$.
		
		Finally, we want to show the asymptotic formula for $\Gamma_c(\beta,h)$. Since $\beta_c(h)$ tends to zero, we only need to consider the "frozen" expression for $\Phi^{\mathrm{REM}}(\beta,h)$. Neglecting terms of subleading order, we may write after some manipulations
		\[ \beta^{-2} \arcosh\left(\frac12 \exp(\Phi^{\mathrm{REM}}(\beta,h))\right)^2 - h^2 \simeq 2 h^2 (\tanh(\beta_c(h)h) - 1) + 2 \beta_c h . \]
		We recall that $ 1- \tanh(\beta_c(h)h) \simeq 2 e^{-2 \beta_c(h)h}  = \frac{4 \beta_c h}{2 \beta_c he^{2 \beta_c(h)h}   } 
		\simeq \frac{\beta_c}{2h} $, 
		where the last equality follows from the proof of part 1. Combining these asymptotic formulas, we arrive at 
		$ \lim_{h \to \infty} \frac{\Gamma(\beta,h)}{\sqrt{h \beta_c(h)}} = 1 $.
\end{proof}

\section{Proof of Proposition~\ref{prop:bsk} and Corollary~\ref{cor:hG}}\label{sec:pfsk}
In this section, we sketch the computations which lead to the results in  Proposition~\ref{prop:bsk} and Corollary~\ref{cor:hG}.

\begin{proof}[Proof of Proposition~\ref{prop:bsk} ]
1.~Let us first recall that $\bar{a}(x)$ is a continuous decreasing function from which it follows that 
	$ x(\beta) = \sup \{x|\,\bar{a}(x) > (2 \ln2)/\beta^2 \}  $
	is well defined for $\beta > \beta_c(0) = \sqrt{2 \ln2/ \bar{a}(0)}$ and increasing in $\beta$. Since $k$ is a decreasing function, we see that $\beta_c(h)$ defined in \eqref{eq:defcritbeta} is an increasing function.
	
	To discuss the limiting value $h \to 0$, we observe that
	$ \lim_{h \to 0} k(2 \ln2/(\beta_c(h)h)) = 0 $. 
	Since $ \bar{a} $ is continuous, $\lim_{\beta \to \beta_c} x(\beta) = 0 $ from which 
	we conclude $\lim_{h \to 0} \beta_c(h) = \beta_c(0)$. 
	Using Assumption~\ref{ass:a} we see that 
	\[ x(\beta) \propto (\beta-\beta_c)^{1/\alpha}.\]
	A direct calculation shows $ k(x)  \propto x^{-2} $ for large $x$. 
	We thus arrive at 
	$ \beta_c(h) - \beta_c(0) \propto h^{2 \alpha} $, 
	and $T_c - T_c(h) \propto h^{2 \alpha}$. \\
	For the limit $h \to \infty$, we first consider the case $\bar{a}(1) > 0$. Then, $x(\beta)$ approaches $1$ as $\beta \to \beta_c(\infty) \coloneqq \sqrt{2 \ln2/ \bar{a}(1)}$ and 
	\[ \lim_{h \to \infty} k(2 \ln2/(\beta_c(h)h)) = 0.  \]
	Consequently, $\lim_{h \to \infty} \beta_c(h) = \beta_c(\infty)$. Similarly, if $\bar{a}(1) = 0$, $x(\beta)$ approaches $1$ as $\beta \to \infty$ and we have  $\lim_{h \to \infty} \beta_c(h) = \infty$. \\
	
	\noindent
	2.a)~The continuity of $y(\beta,h)$ follows from the fact that it is a solution of a continuous implicit equation. Moreover, as $\phi(\beta,y)$ is decreasing in $y$ and k is a decreasing function,too, it follows from \eqref{eq:y} that $y(\beta,h)$ is increasing in $h$. As in part 1., one easily sees that 
	\[ \lim_{h \to 0} k(\varphi(\beta,y(\beta,h))/(\beta h)) = 0 \quad \text{and} \quad \lim_{h \to \infty}  k(\varphi(\beta,y(\beta,h))/(\beta h)) = 1, \]
	which in turn implies 
	$ \lim_{h \to 0} y(\beta,h) = 0 $ and $ \lim_{h \to \infty} y(\beta,h) =1 $.\\
	For the Taylor expansion, we use the fact that 
	\[ k(1/x) = \frac{\ln 2}{2} x^2 + \mathcal{O}(x^4). \]
	Consequently, we have 
	\[ y(\beta,h) = \frac{\ln 2}{2} \left(\frac{\beta h}{\varphi(\beta,y(\beta,h))}\right)^2 + \mathcal{O}(h^4) = \frac{\ln 2}{2} \left(\frac{\beta h}{\varphi(\beta,0)}\right)^2 + \mathcal{O}(h^4). \]
	Recalling that  
	\[ \varphi(\beta,0) = \begin{cases} \frac{\beta^2}{\ln2 \beta_c^2} + \ln 2 & \beta < \beta_c, \\ \frac{2 \ln 2 \beta}{\beta_c} & \beta \geq \beta_c,
	\end{cases}  \]
	we arrive at \eqref{asympy}. \\
	
	\noindent
	2.b)~Both assertions follow immediately from part 2a) and the fact that $\varphi(\beta,x)$ is continuous and decreasing in $x$.
\end{proof}

Finally, we present the proof of Corollary~\ref{cor:hG}:
\begin{proof}[Proof of Corollary~\ref{cor:hG} ]
The limit of the pressure is given by 
\begin{equation*}
		\Phi(\beta,\mathfrak{b},h) = \sup_{0 \leq y \leq z \leq 1} \left[ \beta h \gamma(y) + \int_{0}^{z-y} \varphi^{(y,1)}(\beta, x) \, dx + (1-z) p(\beta,\Gamma) \right] .
\end{equation*}
It follows that if $y(\beta,h) < z(\beta,\Gamma)$, then $y(\beta,h)$ and $z(\beta,\Gamma)$ remain the maximizer for this more general problem. We see that this holds true if and only if 
$ p(\beta\Gamma) < \varphi(\beta,y(\beta,h))$ and the pressure is then given by
\[ \Phi(\beta,\Gamma,h) = \beta h \gamma\left( y(\beta,h) \right) + \int_{ y(\beta,h)}^{z(\beta, \Gamma)} \varphi(\beta,x) dx + \left(1-z(\beta, \Gamma)\right) p(\beta\Gamma).   \]
Otherwise we have $y(\beta,h) \geq z(\beta,\Gamma)$ and, consequently, the corresponding maximizer satisfy $y^{\star}= z^{\star}$, i.e. 
\begin{equation*}
	\Phi(\beta,\Gamma,h) = \sup_{0 \leq y \leq 1} \left[ \beta h \gamma(y) + (1-y) p(\beta \Gamma) \right] .
\end{equation*}
Differentiating with respect to $y$ yields the maximizer 
\[ y^{\star} = \sigma(\beta,\Gamma,h) = k\left(\frac{p(\beta \Gamma)}{\beta h}\right) \]
since $k$ was defined to be the inverse of $\gamma^\prime$.
This completes the proof.
\end{proof}

 \minisec{Acknowledgments}
This work was supported by the DFG under EXC-2111 -- 390814868.

	\bigskip
	\bigskip
	\begin{minipage}{0.5\linewidth}
		\noindent Chokri Manai and Simone Warzel\\
		MCQST \& Zentrum Mathematik \\
		Technische Universit\"{a}t M\"{u}nchen
	\end{minipage}

\end{document}